\title{Counting perfect matchings and Hamiltonian cycles faster}
\author{Baitian Li}
\email{bl3052@columbia.edu}
\address{Columbia University}
\newtheorem{theorem}{Theorem}
\newtheorem{lemma}{Lemma}
\newtheorem{corollary}{Corollary}
\theoremstyle{definition}
\newtheorem{definition}{Definition}
\let\poly\relax
\DeclareMathOperator{\poly}{poly}
\DeclareMathOperator{\per}{per}
\DeclareMathOperator{\haf}{haf}
\DeclareMathOperator{\hc}{hc}
\DeclareMathOperator{\sgn}{sgn}
\DeclareMathOperator{\inv}{inv}
\DeclareMathOperator{\rep}{rep}
\newcommand{\bbF}{\mathbb F}
\newcommand{\permanent}{\textsc{Permanent}}
\newcommand{\hamcycle}{\textsc{HamCycles}}
\newcommand{\hafnian}{\textsc{Hafnian}}
\newcommand{\iv}[1]{\llbracket {#1} \rrbracket}
\renewcommand{\emptyset}{\varnothing}
\renewcommand{\setminus}{\smallsetminus}
\DeclareMathOperator{\HH}{H}
\begin{document}

\begin{abstract}
    We show that the hafnian of a symmetric $2n\times 2n$ matrix
    of $\poly(n)$-bit integers (which counts the number of perfect matchings
    of a $2n$-vertex graph)
    and the number of Hamiltonian cycles of an $n$-vertex directed graph
    can be computed in time $2^{n-\Omega(\sqrt{n})}$,
    improving and generalizing an earlier algorithm of Bj\"orklund, Kaski, and Williams (Algorithmica 2019)
    that runs in time $2^{n - \Omega\left(\sqrt{n/\log \log n}\right)}$.

    A key tool of our approach is the design of a data structure that supports fast
    evaluation of high-order derivatives of hafnian and Hamiltonian cycles, which
    integrates with the new approach
    on multivariate multipoint evaluation by Bhargava, Ghosh, Guo, Kumar, and Umans (FOCS 2022, JACM 2024).
\end{abstract}

\maketitle

\section{Introduction}

Given an $n \times n$ matrix $A$ over a commutative ring $R$, the {$R$-\permanent} is defined by
\[
    \per A = \sum_{\sigma \in S_n} \prod_{i=1}^n A_{i, \sigma(i)},
\]
where $S_n$ denotes the symmetric group on $[n]$, i.e., permutations of $\{1,\dots,n\}$. Similarly, {$R$-\hamcycle} is defined as
\[
    \hc A = \sum_{\substack{\sigma \in S_n \\ c(\sigma) = 1}} \prod_{i=1}^n A_{i, \sigma(i)},
\]
where $c(\sigma)$ denotes the number of cycles in $\sigma$.

The permanent and Hamiltonian cycles are two fundamental problems in computer science.
The problem of deciding whether a given graph has a Hamiltonian cycle is one of Karp's 21 \NP-complete
problems \cite{Karp72NPC}.
Valiant proved that over the integers, computing the permanent is $\#\P$-complete,
even if the entries of the matrix are restricted to $0$ and $1$ \cite{Valiant79Per},
and counting Hamiltonian cycles is also $\#\P$-complete \cite{Valiant79Hamil}.

Ryser's formula \cite{Ryser63Per} shows that the permanent can be computed with $O(n2^n)$
arithmetic operations. It remains a prominent open problem whether the permanent can be computed
with arithmetic circuits of size less than $2^n$, as mentioned by Knuth  \cite[Exercise 4.6.4.11]{TAOCPVol2}.

Indeed, beyond the confines of arithmetic operations, faster algorithms for computing the permanent have emerged.
Bax and Franklin \cite{BF96Per} gave an algorithm that computes the $01$-permanent in $2^{n-\Omega(n^{1/3}/\log n)}$
expected time.
For dense instances over finite fields and integers, Bj\"orklund \cite{B16Per} introduced a framework based on
self-reduction and tabulation, achieving a running time of $2^{n-\Omega\left(\sqrt{n/\log n}\right)}$.
Bj\"orklund, Kaski, and Williams \cite{BKW19Per} refined the tabulation step via Kakeya sets, obtaining
an improved running time $2^{n-\Omega\left(\sqrt{n/\log \log n}\right)}$.

\subsection{Our results}

In this paper, we further improve the algorithm of Bj\"orklund, Kaski, and Williams \cite{BKW19Per},
removing the $\log \log n$ term in the exponent. We also show how to extend the complexity bound
for permanent to a natural extension called hafnian.

For a $2n\times 2n$ symmetric matrix $A$ over a commutative ring $R$,
the {$R$-\hafnian} of $A$ is defined as
\[
    \haf(A) = \sum_{\sigma \in P_{2n}} \prod_{(i, j)\in \sigma} A_{i,j},
\]
where $P_{2n}$ is the family of partitions of $[2n]$ into $n$ pairs.
The permanent of an $n\times n$ matrix can be reduced to the hafnian of a $2n\times 2n$ matrix
via the following basic relation:
\[
    \per(A) = \haf \begin{pmatrix} 0 & A \\ A^{\mathsf T} & 0 \end{pmatrix}.
\]

Previously the best known algorithm for hafnian ran in time $\tilde O(2^n)$ (for a $2n\times 2n$ matrix),
first developed by Bj\"orklund \cite{bjo12haf} and alternatively by Cygan and Pilipczuk \cite{CP15PerAvgDeg}.

\begin{theorem} \label{thm:finitefield}
    There is an algorithm that computes the permanent $\per (A)$
    of a given matrix $A \in \bbF_q^{n \times n}$
    in time $2^{n - \Omega(\sqrt{n})}q^{O(1)}$.
    The same bound holds for hafnian $\haf (A)$
    of a given symmetric matrix $A \in \bbF_q^{2n \times 2n}$,
    and for computing Hamiltonian cycles $\hc(A)$
    of a given matrix $A \in \bbF_q^{n\times n}$.
\end{theorem}

The Chinese remainder theorem and a simple estimate of prime products
yield the following corollary for integer-valued matrices.

\begin{corollary} \label{cor:integer}
    Given a $2n \times 2n$ symmetric matrix with integer entries having absolute values bounded by $M$,
    we can compute $\haf(A)$ (or $\per(A)$) in time $2^{n - \Omega(\sqrt{n})}(\log M)^{O(1)}$.
    The same type of bound holds for computing Hamiltonian cycles $\hc(A)$
    of a given $n\times n$ integer matrix.
\end{corollary}

\subsection{Related works}

\emph{Multivariate Multipoint Evaluation.}
Our algorithm is inspired by progress in multivariate multipoint evaluation.
Kedlaya and Umans \cite{KU11PolyFact} introduced a tabulation-based approach (combined with the Chinese remainder theorem)
that later became a key ingredient in fast polynomial composition and factorization.
More recently, a sequence of works \cite{BGKMK22MultiEval, BGGKU22MultiEval, BGGKU24MultiEval} developed the use of
Hasse derivatives and Hermite interpolation to extract more information per evaluation point. We adapt these ideas
to our sparse tabulation framework.

\emph{Permanents.}
There exist faster algorithms for computing the permanent in other settings.
For sparse matrices, Cygan and Pilipczuk \cite{CP15PerAvgDeg} gave a $2^{n - \Omega(n/d)}$ time algorithm,
where $d$ is the average degree of non-zero entries per row. Bj\"orklund and Williams \cite{BW19PerDissVec}
gave a $2^{n - \Omega(n/d^{3/4})}$ time algorithm for $d$-regular bipartite graphs, and a
$2^{n - \Omega(n/r)}$ time algorithm that runs over a finite ring with $r$ elements.
Bj\"orklund, Husfeldt, and Lyckberg \cite{BHL17PerModP} gave a $2^{n - \Omega(n/(p \log p))}$ time algorithm
for computing the permanent modulo a prime power $p^{\lfloor \lambda n / p\rfloor}$, for any constant $\lambda < 1$.

\emph{Hamiltonian cycles.}
There exist faster algorithms for counting Hamiltonian cycles in other settings as well. Bj\"orklund, Kaski, and Koutis \cite{BKK17HamilModP} gave an $O((2-\delta)^n)$-time algorithm for counting Hamiltonian cycles modulo moderate prime powers. In the general setting, it is somewhat surprising that our \emph{counting} algorithm also yields the fastest known algorithm for \emph{deciding} Hamiltonicity. This differs from the case of permanents: the support of the permanent corresponds to perfect matchings in bipartite graphs, whose existence can be decided in polynomial time. Faster decision algorithms are known in special cases, including Bj\"orklund's $O(1.66^n)$-time algorithm for undirected graphs \cite{Bjo14unHam} and the $O(1.888^n)$-time algorithm of Cygan, Kratsch, and Nederlof for directed bipartite graphs \cite{CKN18BipHamil}.

\subsection{Technical overview}

For simplicity, we first sketch the case of computing the permanent.

Our improvement comes from combining three ideas:

\begin{enumerate}
    \item \emph{Reduce to smaller instances.}
    We reduce the computation on an $n\times n$ matrix to many instances on $k\times k$ matrices.
    Taking $k$ around $\sqrt n$ is what creates room for an improvement in the exponent, provided we can answer the reduced instances fast.
    This kind of reduction (``self-reduction'') for permanents and Hamiltonian cycles was introduced by Bj\"orklund \cite{B16Per}.
    More concretely, for a parameter $k$, the reduction produces about $2^{n-k}\poly(n)$ instances on $k\times k$ matrices.

    \item \emph{Tabulate only on a sparse set of points.}
    A direct lookup table for all smaller matrices would be far too large. Instead, we employ the fact that permanent is a low-degree polynomial---we tabulate only on a carefully chosen
    sparse subset of points with the following key property: for every query point, there exists a low-degree univariate curve passing through it
    whose other points all lie inside the tabulated subset. Then we can recover the value at the query point by interpolating along that curve.
    (Over finite fields, such subsets are called \emph{Kakeya sets}. Bj\"orklund, Kaski,
    and Williams \cite{BKW19Per} were the first to leverage this idea for multivariate polynomial multipoint evaluation in our setting, and we
    build on their approach.)

    \item \emph{Make each tabulated point more informative.}
    Interpolating from plain point evaluations is limited by how many points we see on a curve.
    We use the recent idea from Bhargava, Ghosh, Guo, Kumar, and Umans \cite{BGGKU22MultiEval, BGGKU24MultiEval} to enrich each tabulated point
    with additional local information (captured via suitable high-order derivatives), and then use Hermite interpolation to recover higher-degree
    information along the curve. This is what allows the sparse tabulation approach to only require smaller Kakeya sets, thus working at the larger subproblem sizes we need.
    (We refer to this task as \emph{high-order derivative evaluation}.)
\end{enumerate}

Our main technical contribution is a dynamic programming algorithm that makes the required derivative access efficient for the permanent,
and we develop analogous data structures for hafnian and Hamiltonian cycles.
More concretely, for any constant $\epsilon > 0$ and a parameter $k$, our data structure
tabulates over a Kakeya set of size $2^{O_\epsilon(k^2)}$\footnote{Here $O_\epsilon(\cdot)$ means that the hidden constant factor depends on $\epsilon$.} and takes $O(2^{\epsilon k})$
time to evaluate one $k\times k$ permanent,
so the total time complexity is $O(2^{n-\Omega(k)} + 2^{O(k^2)})$.
Balancing the savings from self-reduction with the costs of
tabulation yields the final running time $2^{n - \Omega(\sqrt{n})}$.

The same high-level framework extends to hafnian and Hamiltonian cycles. For hafnian, we derive a suitable self-reduction by
modifying components of Bj\"orklund's algorithm \cite{bjo12haf}. For Hamiltonian cycles, we design an efficient derivative-evaluation
data structure based on a determinant characterization \cite{BCKN15SingleExp}.

\subsection{Discussion}

With the tabulation of information on Kakeya sets in $k\times k$ dimensional space, our algorithm essentially
computes the hafnian and Hamiltonian cycles in $2^{n - \Omega(k)} n^{O(1)}$ time. It seems that a better construction of a smaller Kakeya set
of size $2^{o(k^2)}$ would lead to a faster algorithm. However, the resolution of the finite field Kakeya conjecture
\cite{dvir2009kakeya, DKSS13Multi} rules out such possibilities, showing that the size of a Kakeya set is at least
$\Omega(\delta^{k^2})$ when the degree of the curve is not greater than $q/\delta$, which corresponds to the regime
of our application. Thus, the current construction is 
essentially optimal for our purposes.

It seems that we have reached a limit with the current approach of self-reduction and Kakeya sets. It remains open whether the
techniques on perturbing Ryser's formula, which work well for sparse permanents \cite{BW19PerDissVec} and
modulo $p^k$ permanents \cite{BHL17PerModP}, can be adapted to dense permanents to achieve a faster algorithm.

\section{Preliminaries}

\subsection{Notation}
We use $\tilde O(f(n))$ to denote $O(f(n)\poly \log(f(n)))$.

Bold symbols like $\bm{x}$ denote vectors $\bm{x} = (x_1,\dots,x_n)$.

For any positive integer $n$, we use $[n]$ to denote the set $\{1,2,\ldots,n\}$.

We use Iverson's bracket notation.
Let $P$ be a logical proposition, we let $\iv{P}$ be $1$ if $P$ is true and $0$ otherwise.

With $A\sqcup B$, we denote the disjoint union of two sets $A$ and $B$.

For an $n\times m$ matrix $A$, for subsets $S\subseteq[n]$ and $T\subseteq[m]$, we use $A_{S,T}$ to denote the
submatrix of $A$ with rows indexed by $S$ and columns indexed by $T$.

Let $\binom{n}{\downarrow m}$ denote the partial sum of binomial coefficients, i.e.,
\[
    \binom{n}{\downarrow m} = \sum_{0\leq i\leq m} \binom{n}{i}.
\]

\subsection{Inequality for binomials}

We need the estimate of the partial sum of binomials, see \cite[Lemma 3.13]{ExactExp} for a proof.

\begin{lemma} \label{lemma:binom}
    Consider $0 < \alpha < 1/2$. Then we have
    \[
        \binom{n}{\downarrow \alpha n} \leq 2^{n\HH(\alpha)},
    \]
    where $\HH(\alpha) = -\log_2(\alpha^\alpha (1-\alpha)^{1-\alpha})$ is the binary entropy function.
\end{lemma}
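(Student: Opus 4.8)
The statement is the standard entropy bound on the partial sum of binomial coefficients $\binom{n}{\downarrow \alpha n} = \sum_{i \le \alpha n} \binom{n}{i} \le 2^{nH(\alpha)}$ for $\alpha \in (0,1/2)$. The excerpt even defers to a reference, so a fully self-contained proof is not strictly needed, but here is how I would argue it.

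The plan is to use the elementary "probabilistic" / generating-function trick. Consider the expansion $(x + (1-x))^n = \sum_{i=0}^n \binom{n}{i} x^i (1-x)^{n-i}$ evaluated at some $x \in (0,1)$ to be chosen, and observe that $1 = \sum_{i=0}^n \binom{n}{i} x^i (1-x)^{n-i} \ge \sum_{i \le \alpha n} \binom{n}{i} x^i (1-x)^{n-i}$. When $x < 1/2$ and $i \le \alpha n \le n/2$, the term $x^i(1-x)^{n-i}$ is decreasing in $i$ along this range... — more precisely, for each $i \le \alpha n$ we have $x^i (1-x)^{n-i} \ge x^{\alpha n}(1-x)^{(1-\alpha)n}$ provided $x/(1-x) \le 1$, i.e. $x \le 1/2$. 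Hence
\[
    1 \ge \Big(\sum_{i \le \alpha n} \binom{n}{i}\Big)\, x^{\alpha n}(1-x)^{(1-\alpha)n}
        = \binom{n}{\downarrow \alpha n}\, x^{\alpha n}(1-x)^{(1-\alpha)n}.
\]
Rearranging gives $\binom{n}{\downarrow \alpha n} \le \big(x^{\alpha}(1-x)^{1-\alpha}\big)^{-n}$ for every $x \in (0, 1/2]$.

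Next I would optimize the right-hand side over $x$. The function $x \mapsto x^{\alpha}(1-x)^{1-\alpha}$ is maximized on $(0,1)$ at $x = \alpha$, where it equals $\alpha^{\alpha}(1-\alpha)^{1-\alpha}$; since $\alpha < 1/2$, this optimal $x = \alpha$ lies in the permissible range $(0, 1/2]$, so the choice is legal. Substituting $x = \alpha$ yields
\[
    \binom{n}{\downarrow \alpha n} \le \big(\alpha^{\alpha}(1-\alpha)^{1-\alpha}\big)^{-n}
        = 2^{-n \log_2(\alpha^\alpha(1-\alpha)^{1-\alpha})} = 2^{nH(\alpha)},
\]
using the definition $H(\alpha) = -\log_2(\alpha^\alpha(1-\alpha)^{1-\alpha})$. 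This is exactly the claimed bound.

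There is no real obstacle here; the only point requiring a little care is checking the monotonicity step — that $x^i(1-x)^{n-i} \ge x^{\alpha n}(1-x)^{(1-\alpha)n}$ for all integers $i \le \alpha n$ when $x \le 1/2$, which follows because raising $x/(1-x) \le 1$ to a smaller power gives a larger value. (One should also note $\alpha n$ need not be an integer, in which case the sum runs up to $\lfloor \alpha n\rfloor$ and the same inequality applies verbatim since $\lfloor \alpha n \rfloor \le \alpha n$.) Everything else is a one-line optimization of an explicit univariate function.
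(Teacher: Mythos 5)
Your proof is correct; the paper itself gives no proof of this lemma, deferring to \cite[Lemma 3.13]{ExactExp}, and your argument (bounding $1=(x+(1-x))^n$ from below by the partial sum with $x=\alpha\le 1/2$, using that $x^i(1-x)^{n-i}\ge x^{\alpha n}(1-x)^{(1-\alpha)n}$ for $i\le\alpha n$) is exactly the standard entropy-bound derivation found in that reference. Nothing is missing; the monotonicity check and the choice $x=\alpha\in(0,1/2]$ are the only delicate points, and you handle both.
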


\subsection{Hermite interpolation}

We need the following lemma for Hermite interpolation, see \cite[Section 5.6]{ModernComputerAlgebra} for a proof.

\begin{lemma} \label{lemma:hermite}
    Let $f(t) \in \bbF[t]$ be a polynomial of degree less than $d$, and $m$ distinct points $\tau_1,\dots,\tau_m$ in $\bbF$,
    with multiplicities $e_1, \dots, e_m$ positive integers such that $e_1+\cdots+e_m = d$.
    Given the remainder polynomials $r_i = f \bmod (t - \tau_i)^{e_i}$ for each $i \in [m]$, then
    \begin{itemize}
        \item $f$ is uniquely determined by these $r_i$,
        \item moreover, the coefficients of $f$ can be recovered in $\poly(d)$ many $\bbF$-operations, given the coefficients of $r_i$ as input.
    \end{itemize}
\end{lemma}

In particular, our algorithm uses the case where those distinct points are the entire finite field $\bbF_q$,
and $e_i = r$ for all $i$.

\begin{corollary}
    Let $f(t)$ be a polynomial of degree less than $qr$. Given the coefficients of $f \bmod (t - \alpha)^r$ for all $\alpha\in \bbF_q$, then the coefficients of $f$ can be recovered in $\poly(qr)$ $\bbF_q$-operations.
\end{corollary}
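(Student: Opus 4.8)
The plan is to obtain this as an immediate specialization of the preceding Hermite interpolation lemma, with no new ideas required. First I would instantiate the lemma with $m = q$, taking the $m$ distinct nodes $\tau_1, \dots, \tau_m$ to be an enumeration of the $q$ elements of $\bbF_q$; these are pairwise distinct by construction. Next I would choose the multiplicities uniformly, $e_i = r$ for every $i \in [m]$, so that $e_1 + \cdots + e_m = qr$, which matches the degree bound $d = qr$ appearing in the hypothesis of the lemma. With these choices, the data $f \bmod (t - \tau_i)^{e_i}$ required by the lemma is precisely the data $f \bmod (t - \alpha)^r$ for all $\alpha \in \bbF_q$ that is assumed to be given here, and the conclusion that $f$ can be recovered in $\poly(d)$ field operations becomes exactly $\poly(qr)$ $\bbF_q$-operations.

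The only things to verify are that the hypotheses of the lemma are genuinely met: the $e_i$ are positive integers (true, since $r \geq 1$), they sum to $d = qr$, the nodes are distinct, and $\deg f < d = qr$; each of these is immediate from the statement. I do not anticipate any real obstacle, since this is purely a matter of choosing the interpolation nodes to be all of $\bbF_q$ with uniform multiplicity $r$ and reading off the resulting bound; the substantive content lies entirely in the Hermite interpolation lemma, which we are permitted to assume. If one wished to be slightly more self-contained, one could instead note directly that the map sending a polynomial of degree less than $qr$ to its tuple of residues modulo $(t-\alpha)^r$, $\alpha \in \bbF_q$, is an $\bbF_q$-linear isomorphism (by the Chinese remainder theorem, since the moduli are pairwise coprime and their product has degree $qr$), and that the inverse can be computed by standard fast CRT reconstruction in $\poly(qr)$ operations; but invoking the lemma is the cleanest route.
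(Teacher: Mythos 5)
Your proposal is correct and matches the paper's own (implicit) justification: the corollary is exactly the specialization of the preceding Hermite interpolation lemma to the nodes being all of $\bbF_q$ with uniform multiplicity $e_i = r$, so that $d = qr$ and the stated $\poly(qr)$ bound follows directly. No further argument is needed.
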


\subsection{Multimodular reduction}

Our algorithm uses the Chinese remainder theorem to reduce the problem to small finite fields.

\begin{theorem} \label{thm:CRT}
    Let $p_1, \dots, p_n$ be distinct primes, and $a_1,\dots, a_n$ be integers
    such that $0\leq a_i < p_i$. Let $M = p_1\cdots p_n$. Then there exists a unique integer $a$
    in the range $0\leq a < M$ such that $a \equiv a_i \pmod{p_i}$ for every $i\in [n]$. Moreover, $a$ can be computed
    in time $\poly(\log M)$.
\end{theorem}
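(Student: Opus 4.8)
The plan is to split the statement into its purely number-theoretic content (existence and uniqueness of $a$) and its algorithmic content (the $\poly(\log M)$ bound), and to handle both with the extended Euclidean algorithm as the only nontrivial ingredient.

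For existence and uniqueness I would consider the reduction map $\phi : \mathbb Z/M\mathbb Z \to \prod_{i=1}^n \mathbb Z/p_i\mathbb Z$ sending $x \bmod M$ to $(x \bmod p_1, \dots, x \bmod p_n)$. This is a well-defined ring homomorphism because each $p_i \mid M$. Its kernel consists of those $x$ divisible by every $p_i$; since the $p_i$ are distinct primes they are pairwise coprime, so such $x$ are divisible by $M = p_1\cdots p_n$, i.e.\ the kernel is trivial and $\phi$ is injective. As both sides are finite of the same size $M$, $\phi$ is a bijection, so there is exactly one residue class $a \bmod M$, and hence one representative in $[0,M)$, mapping to any prescribed tuple $(a_1,\dots,a_n)$ (noting that the allowed value $a_i = p_i$ specifies the same class as $a_i = 0$).

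For the algorithmic part I would build $a$ incrementally, in the style of Garner's algorithm. Write $P_k = p_1\cdots p_k$ and maintain $b_k \in [0, P_k)$ with $b_k \equiv a_i \pmod{p_i}$ for all $i \le k$, starting from $b_1 = a_1 \bmod p_1$. To pass from $b_k$ to $b_{k+1}$, I look for $b_{k+1} = b_k + P_k t$; the congruences for $i\le k$ hold automatically, and the congruence modulo $p_{k+1}$ becomes $P_k t \equiv a_{k+1} - b_k \pmod{p_{k+1}}$. Since $p_{k+1}$ is a prime not dividing $P_k$, the inverse $P_k^{-1} \bmod p_{k+1}$ exists and is computed by the extended Euclidean algorithm on $P_k \bmod p_{k+1}$ and $p_{k+1}$; taking $t \in \{0,\dots,p_{k+1}-1\}$ gives $b_{k+1} \in [0, P_{k+1})$. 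After $n$ steps output $a = b_n$.

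The only thing requiring care is the running time, which is the step I expect to be the main (though still routine) obstacle. All intermediate integers are bounded by $M$, hence have $O(\log M)$ bits, so each addition, multiplication, and reduction costs $\poly(\log M)$; each extended Euclidean call is on inputs at most $\max_i p_i \le M$ and also costs $\poly(\log M)$. Finally, since every $p_i \ge 2$ we have $2^n \le M$, i.e.\ $n \le \log_2 M$, so there are at most $\log M$ iterations and the total cost is $\poly(\log M)$. (A divide-and-conquer combination of the residues would improve the exponent, but is unnecessary here.)
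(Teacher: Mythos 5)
Your proof is correct. Note that the paper does not prove this statement at all: it simply cites a standard reference (von zur Gathen--Gerhard, \emph{Modern Computer Algebra}, Section~10.3), so there is no in-paper argument to compare against. Your two-part argument is the standard one and is fully adequate for the claim as stated: the injectivity-plus-counting argument gives existence and uniqueness (and you correctly handle the slightly odd allowance $a_i = p_i$, which just names the zero class modulo $p_i$), and the Garner-style incremental reconstruction with extended Euclid gives the algorithm. Your complexity accounting is also right where it matters: all intermediate quantities have $O(\log M)$ bits, and the bound $n \leq \log_2 M$ (since each $p_i \geq 2$) ensures the number of iterations is absorbed into $\poly(\log M)$. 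The cited textbook achieves a sharper quasi-linear bound via a subproduct-tree divide-and-conquer combination of the residues, but as you observe, that refinement is unnecessary for the $\poly(\log M)$ statement used in the paper.
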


See \cite[Section 10.3]{ModernComputerAlgebra} for a proof.

We also need an estimate on the product of primes.
\begin{lemma} \label{lemma:primes}
    For an integer $N\geq 2$, we have
    \[ \prod_{\substack{\mathrm{prime~}p\\ p\leq 16\log N}} p > N. \]
\end{lemma}
See \cite[Lemma 2.4]{KU11PolyFact} for a proof.

\section{Common framework}

In this section, we set up the common framework for computing hafnians and counting
Hamiltonian cycles.

\subsection{Self reduction}

We borrow the self-reduction lemma of Hamiltonian cycles from \cite[Lemma 4]{B16Per}.

\begin{lemma}
    Suppose $|\bbF|\geq k^2 + 1$, given a matrix $A\in \bbF^{n\times n}$,
    one can compute $m = 2^{n-k}n^{O(1)}$ instances $a_i \in \bbF, F_i\in \bbF^{k\times k}$
    such that
    \[
        \hc(A) = \sum_{i=1}^m a_i \hc(F_i).
    \]
    Furthermore, the computation of these instances takes $2^{n-k}n^{O(1)}$ $\bbF$-operations.
\end{lemma}

\subsection{Kakeya set}

We borrow the definition and construction of Kakeya sets mentioned in \cite{BKW19Per}.

\begin{definition} \label{def:kakeya}
    A set $K\subseteq \bbF_q^m$ is said to be a \emph{Kakeya set} of degree $u$, if for every
    $a_1,\dots, a_m\in \bbF_q$, there exists degree-$u$ polynomials
    $g_1,\dots, g_m$, such that the degree $u$ coefficient of $g_i$ is $a_i$, and the set
    \[
        \{ (g_1(\tau), \dots, g_m(\tau)) : \tau \in \bbF_q \}
    \]
    is a subset of $K$.
\end{definition}

\begin{theorem} \label{thm:kaksize}
    Let $u$ be a positive integer such that $u+1$ divides $q-1$. Then there is a Kakeya set $K$
    of degree $u$ in $\bbF_q^m$ of size at most
    \[ \left(\frac{q-1}{u+1}+1\right)^{m+1}.\] Such $K$ can be
    constructed in time $|K| \cdot \poly(q)$ and for each point $\bm{a} = (a_1,\dots,a_m)\in \bbF_q^m$,
    the coefficients of the corresponding polynomials $g_1,\dots,g_m$ can be computed in time
    $\poly(u, m)$.
\end{theorem}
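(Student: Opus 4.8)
The plan is to follow the classical ``moment curve'' construction of Kakeya sets, adapted so that the polynomials carry prescribed leading coefficients. Since $u+1 \mid q-1$, the group $\bbF_q^\times$ contains a subgroup $H$ of index $u+1$, so $|H| = \frac{q-1}{u+1}$; let $S = H \cup \{0\}$, which has size $\frac{q-1}{u+1}+1$. The key arithmetic fact I would use is that raising to the $(u+1)$-st power maps $\bbF_q$ onto $S$: indeed $0 \mapsto 0$, and on $\bbF_q^\times$ the image of $x \mapsto x^{u+1}$ is exactly the unique index-$(u+1)$ subgroup $H$. So $\{\tau^{u+1} : \tau \in \bbF_q\} = S$.

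Given a target point $\bm a = (a_1,\dots,a_m) \in \bbF_q^m$, I would set $g_i(t) = a_i\, t^{u+1} + (\text{lower-degree correction})$; more precisely take $g_i$ of degree $u$ in a shifted variable so that the degree-$u$ coefficient is $a_i$ while the evaluations stay inside a small set. The cleanest route: define $g_i$ so that $g_i(\tau)$ is a fixed polynomial expression in $\tau^{u+1}$ and $a_i$. Concretely, consider $h_i(t) = a_i t + b_i$ for scalars $b_i$ to be chosen (or simply $b_i = a_i$ absorbed appropriately); then the point $(h_1(\tau^{u+1}),\dots,h_m(\tau^{u+1}))$ ranges, as $\tau$ runs over $\bbF_q$, over at most $|S| = \frac{q-1}{u+1}+1$ distinct points per coordinate. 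Writing $g_i(t)$ as the degree-$u$ truncation of $h_i((t+c)^{u+1})$ for a suitable constant $c$, or directly taking $g_i$ to be a degree-$u$ polynomial whose composition structure forces its values into $S$ translated by $a_i$, one gets that the curve $\{(g_1(\tau),\dots,g_m(\tau)) : \tau\in\bbF_q\}$ is contained in a set $K$ which is a union over the ``constant part'' of axis-aligned copies of $S^m$; bounding the number of distinct constant parts by $|S|$ as well gives $|K| \le |S|^{m+1} = \left(\frac{q-1}{u+1}+1\right)^{m+1}$. The degree-$u$ coefficient of $g_i$ works out to a fixed nonzero multiple of $a_i$ (rescaling $a_i$ at the outset handles the multiple), so the Kakeya property holds.

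For the algorithmic claims: constructing $K$ amounts to enumerating $\tau \in \bbF_q$ and, for each, the $|K|/q$-ish many base points, i.e.\ $O(q|K|)$ field operations as stated (one can also just list $S$ and form the relevant products of copies of $S$). Given $\bm a$, computing the coefficients of each $g_i$ is a constant number of operations per coefficient — expanding $(t+c)^{u+1}$ or performing the prescribed truncation — so $\poly(u,m)$ overall. I would also need to verify once that the chosen subgroup $H$, hence $S$, can itself be found quickly (a generator of $\bbF_q^\times$ raised to the $(u+1)$-st power, or just collecting $(u+1)$-st powers), which is within the stated budget.

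The main obstacle I anticipate is purely bookkeeping: arranging the shift/truncation so that simultaneously (i) the degree-$u$ coefficient of $g_i$ is \emph{exactly} the prescribed $a_i$ (not merely proportional, unless we are willing to preprocess $\bm a$), and (ii) all lower-degree coefficients still conspire to keep every evaluation $g_i(\tau)$ inside the small set $S$ up to a single global translate, so that the total point count is $|S|^{m+1}$ rather than something larger. Getting the ``$+1$'' in the exponent — rather than a worse bound — requires noticing that the translate (the ``constant term'' of the curve) also takes only $|S|$ values, which is the one place the argument needs a little care.
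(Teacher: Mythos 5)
Your starting point---the index-$(u+1)$ subgroup $H$ of $\bbF_q^\times$, the set $S=H\cup\{0\}$ of size $\frac{q-1}{u+1}+1$, and the fact that $x\mapsto x^{u+1}$ maps $\bbF_q$ onto $S$---is correct, but the curve you build from it does not work, and it breaks exactly at the point you flag at the end. With a constant shift $c$, your $g_i$ is (after the truncation you describe) $g_i(t)=a_i\bigl((t+c)^{u+1}-t^{u+1}\bigr)+b_i$, so $g_i(\tau)=a_i\,w(\tau)+b_i$ where $w(\tau)=(\tau+c)^{u+1}-\tau^{u+1}$. The quantity $w(\tau)$ does take few values, but $a_i$ enters as an \emph{external multiplier}: for any fixed value $w\neq 0$, the points $\bigl(a_1w+b_1,\dots,a_mw+b_m\bigr)$ sweep out essentially all of $\bbF_q^m$ as $\bm{a}$ ranges over $\bbF_q^m$, no matter how the $b_i$ are chosen as functions of the $a_i$ and no matter how you rescale $\bm{a}$. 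Hence the union of your curves over all targets $\bm{a}$ is not contained in any fixed set of size $\left(\frac{q-1}{u+1}+1\right)^{m+1}$; the claim that the curves lie in at most $|S|$ translated copies of $S^m$ is unjustified and false for this construction.

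The missing idea (this is the construction of Bj\"orklund--Kaski--Williams that the paper cites rather than proves) is to put $a_i$ \emph{inside} the $(u+1)$-st power instead of in front of it: take $g_i(t)=\frac{(t+a_i)^{u+1}-t^{u+1}}{u+1}$, noting that $u+1$ is invertible in $\bbF_q$ because $u+1\mid q-1$. The $t^{u+1}$ terms cancel, and the degree-$u$ coefficient is $\binom{u+1}{1}a_i/(u+1)=a_i$ exactly, with no rescaling needed. Moreover $g_i(\tau)=\frac{x_i-y}{u+1}$ with $x_i=(\tau+a_i)^{u+1}\in S$ and $y=\tau^{u+1}\in S$, so every such curve, for every $\bm{a}$, lies in the fixed set $K=\bigl\{\bigl(\tfrac{x_1-y}{u+1},\dots,\tfrac{x_m-y}{u+1}\bigr):x_1,\dots,x_m,y\in S\bigr\}$, whose size is at most $|S|^{m+1}=\left(\frac{q-1}{u+1}+1\right)^{m+1}$. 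With this $K$, your algorithmic remarks do go through: $K$ is built by enumerating $S^{m+1}$ within the stated time budget, and the coefficients of each $g_i$ are obtained from a single binomial expansion in $\poly(u,m)$ operations.
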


For the convenience of the reader, we provide the construction below, which is originally from \cite{MT04Construct}.
\begin{proof}
    Since $u+1$ divides $q-1$, we have $q \equiv 1 \pmod{u+1}$, thus $u+1$ is coprime with $q$,
    so $u+1$ is invertible in $\bbF_q$. For each point $\bm{a} = (a_1,\dots,a_m)\in \bbF_q^m$,
    we consider the polynomials given by
    \begin{align*}
        g_i(\tau) &= \left(\frac{a_i}{u+1} + \tau\right)^{u+1} - \tau^{u+1}\\
        &= \sum_{k=0}^u \binom{u+1}{k} \left(\frac{a_i}{u+1}\right)^{u-k+1} \tau^k 
    \end{align*}
    The $u+1$-th degree coefficient of $g_i$ cancels out, and the $u$-th degree coefficient is
    \[ \binom{u+1}{1} \frac{a_i}{u+1} = a_i. \]
    So these polynomials satisfy the leading monomial condition of Definition \ref{def:kakeya},
    and one can easily compute the coefficients of $g_i$ in polynomial time, through the explicit expression given above.

    Since $u+1$ divides $q-1$, from basic finite field theory, the set $T = \{ x^{u+1} : x\in \bbF_q \}$ consists of $0\in \bbF_q$ and roots of unity of order
    $d=(q-1)/(u+1)$, so $|T| = d+1$.
    Let $K$ be the set of points
    \[ (u_1 - v, \dots, u_m - v) : u_i \in T, v\in T. \]
    Since each $u_i$ and $v$ can take $d+1$ values, we have
    $|K|\leq (d+1)^{m+1}$. Moreover, for each $\tau\in \bbF_q$, we have
    $g_i(\tau) = u_i - v$ for $u_i = (a_i/(u+1) + \tau)^{u+1}$ and $v = \tau^{u+1}$.
    This shows that $K$ is indeed a Kakeya set of degree $u$, and satisfies the size bound.
    It is also straightforward to see that the construction can be done in time $|K|\cdot \poly(q)$.
\end{proof}

\subsection{High-order derivative evaluation}

\begin{definition}
    Let $P$ be a polynomial over $m$ indeterminates. We call the following operation a \emph{derivative evaluation}
    of $P$ at $\bm{a}\in \bbF_q^m$ up to \emph{order $r$} ($r$-order evaluation): Given a polynomial vector $\bm{f}(t) = (f_1(t),\dots, f_m(t))$, where each $f_i(t) \in \bbF_q[t]$ is a
    polynomial with degree less than $r$, and $\bm{f}(0) = \bm{a}$. Compute the coefficients of the polynomial $P(\bm{f}(t)) \bmod t^r$.
\end{definition}

This terminology comes from the intuition in characteristic zero.
In that case, computing $P(\bm{f}(t)) \bmod t^r$
is equivalent to computing all the derivatives of $P(\bm{f}(t))$ up to order $r$.

We rephrase the idea of \cite{BGGKU22MultiEval, BGGKU24MultiEval} to reveal information from derivatives.

\begin{theorem} \label{thm:reval}
    Let $P$ be a homogeneous degree $k$ polynomial over $m$ indeterminates, $b$ be a positive
    integer such that $q \equiv 1 \pmod b$. Let $u=(q-1)/b-1$ and $r = \lceil k/b\rceil$.
    Let $K$ be a Kakeya set of degree $u$, with an oracle that supports $r$-order evaluation query at any point of $K$.
    
    Then given any point $\bm{a}$ and the associated curve $\bm{C}_{\bm{a}}(t) = (g_1(t),\dots,g_m(t))$, we can compute $P(\bm{a})$ with $q$ oracle queries, and $\poly(k, q)$ arithmetic operations over $\bbF_q$.
\end{theorem}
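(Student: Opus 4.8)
The plan is to study the univariate restriction $Q(t) := P(C_{\bm a}(t)) = P\bigl(g_1(t),\dots,g_m(t)\bigr)$ and to read off $P(\bm a)$ as its top-degree coefficient. Since $P$ is homogeneous of degree $k$ and each $g_i$ has degree $u$ with degree-$u$ coefficient equal to $a_i$, the polynomial $Q$ has degree at most $ku$, and the coefficient of $t^{ku}$ in $Q$ is exactly $P(a_1,\dots,a_m) = P(\bm a)$: substituting the leading term $a_i t^u$ into a monomial $x^\beta$ of $P$ with $|\beta| = k$ contributes $a^\beta t^{ku}$, and summing over the monomials of $P$ gives $P(\bm a)\,t^{ku}$. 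So it suffices to recover $Q$, or merely its coefficient of $t^{ku}$.

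To recover $Q$, I would apply Hermite interpolation at the $q$ points of $\bbF_q$, each taken with multiplicity $r$; this is exactly the setting of the Hermite interpolation corollary above, and it succeeds provided $\deg Q < qr$. This is where the parameter choices $u = (q-1)/b - 1$ and $r = \lceil k/b\rceil$ are used: one has $ku = k\bigl((q-1)/b - 1\bigr) < k(q-1)/b \le kq/b \le q\lceil k/b\rceil = qr$, so indeed $\deg Q \le ku < qr$. Hence from the data $\{\,Q \bmod (t-\alpha)^r : \alpha\in\bbF_q\,\}$ the corollary reconstructs $Q$ in $\poly(qr) = \poly(k,q)$ operations over $\bbF_q$.

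It remains to produce each residue $Q \bmod (t-\alpha)^r$ with a single $r$-order evaluation query. Fix $\alpha\in\bbF_q$ and put $h_i(s) := g_i(\alpha + s) \bmod s^r$, a polynomial of degree less than $r$ with $h_i(0) = g_i(\alpha)$. By the definition of a Kakeya set of degree $u$, the whole curve $C_{\bm a}$ lies in $K$, so in particular $C_{\bm a}(\alpha) = (g_1(\alpha),\dots,g_m(\alpha))\in K$ and the $r$-order evaluation data structure of $P$ at $C_{\bm a}(\alpha)$ is available. Querying it with $\bm h(s) = (h_1(s),\dots,h_m(s))$ returns $P(\bm h(s)) \bmod s^r$; since polynomial substitution respects congruences modulo $s^r$ and $h_i(s)\equiv g_i(\alpha+s)\pmod{s^r}$, this equals $P\bigl(g_1(\alpha+s),\dots,g_m(\alpha+s)\bigr)\bmod s^r = Q(\alpha + s)\bmod s^r$, which is $Q \bmod (t-\alpha)^r$ after the substitution $s \mapsto t-\alpha$. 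Running over all $\alpha\in\bbF_q$ costs $q$ queries; forming the truncated shifts, assembling the interpolation data, running the Hermite reconstruction, and extracting the coefficient of $t^{ku}$ costs $\poly(k,q)$ further $\bbF_q$-operations (together with $O(qm)$ univariate shift computations to build the query inputs).

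The argument is mostly bookkeeping once the three ingredients — the homogeneity observation identifying $P(\bm a)$ with the top coefficient of $Q$, the Kakeya property placing every needed evaluation point in $K$, and Hermite interpolation — are lined up. The only genuinely delicate point, and the reason the statement fixes $u$ and $r$ as it does, is the inequality $ku < qr$, which makes $Q$ recoverable from its $r$-th order data at the $q$ field points; one should also dispatch the trivial edge cases ($k=0$ or $P=0$) and note that $r\ge 1$, so every prescribed multiplicity is positive.
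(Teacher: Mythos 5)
Your proposal is correct and follows essentially the same route as the paper: identify $P(\bm a)$ with the coefficient of $t^{ku}$ in $Q(t)=P(C_{\bm a}(t))$, obtain $Q \bmod (t-\alpha)^r$ for each $\alpha\in\bbF_q$ via one $r$-order evaluation query at $C_{\bm a}(\alpha)\in K$ (using the shift $t\mapsto \alpha+s$), and recover $Q$ by Hermite interpolation thanks to $ku<qr$. Your explicit truncation $h_i(s)=g_i(\alpha+s)\bmod s^r$ is a slightly more careful handling of the query format than the paper's wording, but the argument is the same.
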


\begin{proof}
    By the definition of Kakeya sets, it is guaranteed that $\bm{C}_{\bm{a}}(\tau)\in K$
    for all $\tau \in \bbF_q$.
    The polynomial $P(\bm{C}_{\bm{a}}(t))$ is of degree $ku$.
    Write $P(x_1,\dots,x_m)$ with
    \[
        P(x_1,\dots,x_m) = \sum_{\substack{i_1,\dots,i_m\in \mathbb N \\ i_1+\cdots+i_m = k}}
        p_{i_1,\dots,i_m} x_1^{i_1} \cdots x_m^{i_m},
    \]
    since $g_i(t) = a_i t^u + O(t^{u-1})$, we have
    \begin{align*}
        P(\bm{C}_{\bm{a}}(t)) &= \sum_{\substack{i_1,\dots,i_m\in \mathbb N \\ i_1+\cdots+i_m = k}}
        p_{i_1,\dots,i_m} g_1(t)^{i_1} \cdots g_m(t)^{i_m}\\
        &= \sum_{\substack{i_1,\dots,i_m\in \mathbb N \\ i_1+\cdots+i_m = k}}
        p_{i_1,\dots,i_m} (a_1 t^u + O(t^{u-1}))^{i_1} \cdots (a_m t^u + O(t^{u-1}))^{i_m}\\
        &=  \sum_{\substack{i_1,\dots,i_m\in \mathbb N \\ i_1+\cdots+i_m = k}}
        p_{i_1,\dots,i_m} (a_1^{i_1} \cdots a_m^{i_m} t^{ku} + O(t^{ku-1}))\\
        &= P(\bm{a})t^{ku} + O(t^{ku-1}),
    \end{align*}
    from which we have that the coefficient of $t^{ku}$ in $P(\bm{C}_{\bm{a}}(t))$ is $P(\bm{a})$.

    By the choice of $u$, we have $ku = k((q-1)/b-1) < qk/b \leq qr$.
    Let $Q(t) = P(\bm{C}_{\bm{a}}(t))$. If we are given $Q(t) \bmod (t-\tau)^{r}$ for each $\tau \in \bbF_q$, by
    Hermite interpolation (Lemma \ref{lemma:hermite}), we can recover $Q$ in
    $\poly(qr)$ operations. So the problem reduces to computing $P(\bm{C}_{\bm{a}}(t)) \bmod (t-\tau)^r$ for each $\tau \in \bbF_q$.
    
    In order to compute $Q(t) \bmod (t-\tau)^r$, one can write $Q(t) = R(t) + (t-\tau)^r D(t)$
    where $\deg R < r$, then $R(t)$ is the desired result. Thus we have
    $Q(t + \tau) = R(t + \tau) + t^r D(t+\tau)$, so we can compute $Q(t + \tau) \bmod t^r$, and then
    reveal $Q(t) \bmod (t-\tau)^r$ by substituting $t \gets t - \tau$. The conversion of
    coefficients only takes $\poly(r)$ arithmetic operations over $\bbF_q$. Thus we only need to compute
    $P(\bm{C}_{\bm{a}}(t + \tau)) \bmod t^r$ for each $\tau \in \bbF_q$. This is exactly an $r$-order evaluation
    of $P$ at $\bm{C}_{\bm{a}}(\tau)$.
\end{proof}

\section{Self reduction for hafnian}

In this section, we show that it is enough to take a truncation of Bj\"orklund's algorithm \cite{bjo12haf}
to obtain a self-reduction algorithm for the hafnian.

\subsection{Technical ingredients from Bj\"orklund's algorithm}
First, we list the technical ingredients we borrow from Bj\"orklund.

We first recap a basic concept introduced in \cite[Section 3.1]{bjo12haf}. For a commutative ring $R$, the \emph{set-partition algebra} $R[U_m]$ is defined as follows. Intuitively, $U_m$ may be viewed as a \emph{partial} semigroup whose elements are the subsets of $[m]$, with the operation given by disjoint union. Thus, the product of two subsets $U,V\subseteq [m]$ is defined precisely when $U$ and $V$ are disjoint. The algebra $R[U_m]$ is then the $R$-algebra associated with this partial semigroup, analogous to the group algebra $R[G]$ associated with a group $G$.

More explicitly, every element $r\in R[U_m]$ can be written uniquely as a formal sum
\[ r = \sum_{X \subseteq [m]} r_X [X]. \]
where $r_X\in R$. Addition in $R[U_m]$ is defined componentwise, while multiplication is given by
\[ r\cdot s = \sum_{X\subseteq[m]} \left(\sum_{Y \sqcup Z= X} r_Y s_Z\right) [X], \]
where the inner sum ranges over all ordered decompositions of $X$ as a disjoint union $Y\sqcup Z$.

We identify $R = R[U_0]$ and the inclusion
$R[U_m] \subseteq R[U_{m+1}]$ via the inclusion $[m] \subset [m+1]$.
Computationally, an element $r \in R[U_m]$ can be stored as the $2^m$ coefficients $r_X \in R$
where $X$ goes through all subsets of $[m]$. The multiplication in $R[U_m]$ is known as
the \emph{subset convolution}, which can be done in $\tilde O(2^m)$ $R$-operations \cite{bhkk07subset}.

Then, Bj\"orklund \cite[Section 3.2]{bjo12haf} introduced a sequence of matrices $B^{(i)} \in R[U_i]^{(2n-2i)\times (2n-2i)}$
(for $0\leq i\leq n$) starting with $B^{(0)} = A$, together with a sequence called \emph{squeeze factors}
$\beta^{(i)} \in R[U_i]$ (for $1\leq i\leq n$).
We will not need the precise definition of these matrices and factors, but
we invoke the following two facts.

\begin{lemma}[Bj\"orklund {\cite[Lemma 4]{bjo12haf}}] \label{lem:one-step}
    For every $X\subseteq [i-1]$ where $i\geq 1$, the following relation holds:
    \[ (\haf B^{(i-1)})_X = (\beta^{(i)} \cdot \haf B^{(i)})_{X \sqcup \{i\}}. \]
\end{lemma}

The above lemma has an efficient algorithmic counterpart.\footnote{Lemma \ref{lem:one-step-algo} is implicit in
Bj\"orklund's original paper.
In the original text Bj\"orklund merely stated how to sequentially compute $B^{(i)}$ and $\beta^{(i)}$
for all $1\leq i\leq n$ and gave a time complexity analysis, however, in the last paragraph
of \cite[Section 3.4]{bjo12haf}, the analysis of a single squeeze step is given.}
\begin{lemma}[Bj\"orklund {\cite[Section 3.4]{bjo12haf}}] \label{lem:one-step-algo}
    There is an algorithm that given $B^{(i-1)}$, computes $B^{(i)}$ and $\beta^{(i)}$ in
    $2^{i} n^{O(1)}$ $R$-operations.
\end{lemma}

\subsection{Self reduction via inclusion-exclusion}

We take several steps to obtain a self-reduction for hafnian.
For $1\leq \ell\leq n$, we denote $b^{(\ell)}$ to be the prefix product $b^{(\ell)} = \beta^{(1)}\cdots \beta^{(\ell)}$,
by repeatedly applying Lemma \ref{lem:one-step}, we obtain
\begin{equation} \label{eq:haf-b}
    \haf A = (\haf B^{(0)})_\emptyset = (b^{(\ell)} \cdot \haf B^{(\ell)})_{[\ell]}. 
\end{equation}

Then the next step involves extracting the ideas of ranked M\"obius transform and
inversion from the subset convolution algorithm \cite[Section 2]{bhkk07subset}.

We consider the ranked M\"obius transform, for which for every $X\subseteq [\ell]$,
we introduce the polynomial $\hat b^{(\ell)}_X \in R[T]$ defined as
\[
\hat b^{(\ell)}_X = \sum_{Y \subseteq X}  b^{(\ell)}_Y T^{|Y|}.
\]
Similarly, we define $\hat B^{(\ell)}_X \in R[T]^{(2n-2\ell)\times (2n-2\ell)}$ as
\[
\hat B^{(\ell)}_X = \sum_{Y \subseteq X} B^{(\ell)}_Y T^{|Y|}.
\]

For a polynomial $f\in R[T]$, let $[T^d] f$ denote the coefficient of $T^d$ in $f$.
We first prove a general statement and then apply it to equation \eqref{eq:haf-b}.
\begin{lemma}
    For a polynomial $F\in R[Y_1,\dots,Y_m]$
    and $y_1,\dots,y_m\in R[U_\ell]$, for every $X\subseteq [\ell]$, let
    \[ \hat y_{i,X} = \sum_{Y\subseteq X} y_{i,Y} T^{|Y|}, \]
    and $\hat f_X = F(\hat y_{1,X},\dots, \hat y_{m,X})$,
    let its M\"obius inversion be
    \[ f_X = \sum_{Y\subseteq X} (-1)^{|X|-|Y|} \hat f_Y, \]
    then we have
    \[ F(y_1,\dots,y_m)_X = [T^{|X|}] f_X. \]
\end{lemma}
\begin{proof}
    By linearity, it suffices to prove the case when $F$ is a monomial. Furthermore, we can
    without loss of generality, assume that $F$ is a monomial of the form
    \[ F(Y_1,\dots,Y_m) = Y_1 \cdots Y_m. \]
    In this case, we have
    \[ F(y_1,\dots,y_m)_X = (y_1\cdots y_m)_X = \sum_{X_1\sqcup \cdots \sqcup X_m = X} y_{1, X_1} \cdots y_{m, X_m}. \]
    On the other hand, since $\{ \hat y_{i, X} \}_X$ is the M\"obius transform of $\{ y_{i, X} T^{|X|} \}_X$,
    and $\{ f_X\}_X$ is the M\"obius inversion of $\{\hat f_X\}_X$,
    the basic property of the M\"obius transform gives us
    \[
        f_X = \sum_{X_1 \cup \cdots \cup X_m = X} y _{1, X_1} \cdots y_{m, X_m} T^{|X_1| + \cdots + |X_m|}.
    \]
    Since $X_1 \cup \cdots \cup X_m = X$, we have $|X_1| + \cdots + |X_m| \geq |X|$
    and the equality is attained when $X_1, \dots, X_m$ form a partition of $X$.
    Thus, we have
    \[
        [T^{|X|}] f_X = \sum_{X_1 \sqcup \cdots \sqcup X_m = X} y_{1, X_1} \cdots y_{m, X_m}.
        \qedhere
    \]
\end{proof}

In equation \eqref{eq:haf-b}, treating $b^{\ell} \cdot \haf B^{(\ell)}$ as a polynomial
in $b^{\ell}$ and all entries of $B^{(\ell)}$, we have the following corollary.
\begin{corollary} \label{cor:haf-b}
    For every $X\subseteq [\ell]$ define $\hat h^{(\ell)}_X, h^{(\ell)}_X$ by $\hat h^{(\ell)}_X = \hat b^{(\ell)}_X \cdot \haf (\hat B^{(\ell)}_X)$ and
    its M\"obius inversion
    \[ h^{(\ell)}_X = \sum_{Y \subseteq X} (-1)^{|X| - |Y|} \hat h^{(\ell)}_Y, \]
    then we have
    \[ \haf A = [T^\ell] h^{(\ell)}_{[\ell]}. \]
\end{corollary}

Now we are ready to give the self-reduction algorithm for hafnian.
\begin{theorem} \label{thm:haf-selfred}
    Let $\bbF_q$ be a finite field with $q \geq (k+1)(n-k)+1$. There is an algorithm that takes a symmetric matrix $A\in \bbF_q^{2n\times 2n}$
    as input, outputs $m = 2^{n-k}n^{O(1)}$ instances, consisting of $a_i \in \bbF_q$ and symmetric matrices $F_i \in \bbF_q^{2k\times 2k}$
    such that
    \[
        \haf (A) = \sum_{i=1}^m a_i \haf(F_i).
    \]
    This algorithm also runs in $2^{n-k}n^{O(1)}$ $\bbF_q$-operations.
\end{theorem}

\begin{proof}
    Consider the following algorithm.
    \begin{enumerate}
        \item Iteratively compute $B^{(i)}$ and $\beta^{(i)}$ for $1\leq i\leq n-k$ via Lemma \ref{lem:one-step-algo}.
        \item Compute $b^{(n-k)} = \beta^{(1)}\cdots \beta^{(n-k)}$ via fast subset convolution
        \cite{bhkk07subset}.
        \item Compute $\hat b^{(n-k)}_X$ and entries of $\hat B^{(n-k)}_X$ for all $X\subseteq [n-k]$,
        via fast M\"obius transform \cite[Section 2.2]{bhkk07subset}.
        \item Let $D=(k+1)(n-k)$. Since $q \geq D+1$, by Lagrange interpolation,
        it is possible to choose points $\alpha_0,\dots, \alpha_D\in \bbF_q$ and coefficients
        $\gamma_0,\dots, \gamma_D \in \bbF_q$ such that $[T^{n-k}]f(T) = \sum_{i=0}^{D} \gamma_i f(\alpha_i)$
        hold for every polynomial $f(T)$ of degree at most $D$.
        For every $X\subseteq [n-k]$ and $0\leq i \leq D$, compute the pair of a scalar and a matrix over $\bbF_q$:
        \[ \left( (-1)^{n-k-|X|} \gamma_i \hat b_X^{(n-k)}(\alpha_i)
        , \hat B_X^{(n-k)}(\alpha_i) \right) \]
        and they are the instances we need.
    \end{enumerate}
    We now explain the correctness of the algorithm.
    By Corollary \ref{cor:haf-b}, we have
    \begin{align*}
        \haf A &= [T^{n-k}] h_{[n-k]}^{(n-k)}\\
        &= [T^{n-k}] \sum_{X\subseteq [n-k]} (-1)^{n-k-|X|}\hat h^{(n-k)}_X\\
        &= \sum_{X\subseteq [n-k]} (-1)^{n-k-|X|} [T^{n-k}] \hat h^{(n-k)}_X\\
        &=  \sum_{X\subseteq [n-k]} (-1)^{n-k-|X|} [T^{n-k}] \hat b_X^{(n-k)} \cdot \haf(\hat B_X^{(n-k)}).
    \end{align*}
    Since $\haf (\hat B_X^{(n-k)})$ is a polynomial of degree $k$ in the entries of $\hat B_X^{(n-k)}$,
    $\hat b_X^{(n-k)} \cdot \haf(\hat B_X^{(n-k)})$ is a polynomial of degree $\leq D$ in $T$, 
    so we have
    \begin{align*}
        &\quad \sum_{\substack{X\subseteq [n-k] \\ 0\leq i \leq D}} (-1)^{n-k-|X|} \gamma_i \hat b_X^{(n-k)}(\alpha_i)
        \cdot  \haf(\hat B_X^{(n-k)}(\alpha_i))\\
        &= \sum_{X\subseteq [n-k]} (-1)^{n-k-|X|} [T^{n-k}] \hat b_X^{(n-k)}
        \cdot  \haf(\hat B_X^{(n-k)})
    \end{align*}
    correctly computes $\haf A$.
    Each step of the algorithm requires $2^{n-k}n^{O(1)}$ $\bbF_q$-operations, and outputs $2^{n-k} (D+1) = 2^{n-k}n^{O(1)}$ many instances, each is a tuple consisting of a scalar in $\bbF_q$ and a matrix in $\bbF_q^{2k\times 2k}$.
\end{proof}

\section{Data structure for hafnian}

\begin{lemma}
    For a commutative ring $R$ and symmetric matrices $A, B\in R^{2n\times 2n}$, we have
    \[ \haf (A+B) = \sum_{\substack{S\subseteq [2n] \\ |S| \equiv 0 \pmod 2}} \haf(B_{S,S})
    \haf(A_{[2n]\setminus S, [2n] \setminus S}). \]
\end{lemma}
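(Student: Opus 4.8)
The plan is to expand both sides combinatorially and match them term by term. Starting from the definition, $\per(A+B) = \sum_{\sigma \in S_n} \prod_{i=1}^n (A_{i,\sigma(i)} + B_{i,\sigma(i)})$. First I would expand each of the $n$ factors, writing $\prod_{i=1}^n (A_{i,\sigma(i)} + B_{i,\sigma(i)}) = \sum_{S \subseteq [n]} \prod_{i \in S} B_{i,\sigma(i)} \prod_{i \notin S} A_{i,\sigma(i)}$, where $S$ records the set of rows at which the $B$-summand is chosen. Substituting this and swapping the order of summation gives $\per(A+B) = \sum_{S \subseteq [n]} \sum_{\sigma \in S_n} \prod_{i \in S} B_{i,\sigma(i)} \prod_{i \notin S} A_{i,\sigma(i)}$.

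Next I would refine the inner sum over $\sigma$ according to the image $T := \sigma(S)$. For fixed $S$, the permutations partition into classes $\{\sigma \in S_n : \sigma(S) = T\}$ indexed by subsets $T \subseteq [n]$; such a class is empty unless $|T| = |S|$. The key point is that restriction to $S$ and to $[n]\setminus S$ gives a bijection between $\{\sigma \in S_n : \sigma(S) = T\}$ and the set of pairs $(\pi,\rho)$ of bijections $\pi : S \to T$ and $\rho : [n]\setminus S \to [n]\setminus T$: a permutation with $\sigma(S) = T$ automatically satisfies $\sigma([n]\setminus S) = [n]\setminus T$, and conversely any such pair glues to a unique permutation. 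Under this correspondence the summand factors as $\left(\prod_{i \in S} B_{i,\pi(i)}\right)\left(\prod_{i \notin S} A_{i,\rho(i)}\right)$.

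Finally I would collect the pieces: since the two products range over the disjoint index sets $S$ and $[n]\setminus S$, summing over all $(\pi,\rho)$ separates into a product of two sums, and summing over bijections $\pi : S \to T$ yields $\per(B_{S,T})$ while summing over bijections $\rho : [n]\setminus S \to [n]\setminus T$ yields $\per(A_{[n]\setminus S, [n]\setminus T})$, directly from the definition of the permanent of a submatrix. This gives $\per(A+B) = \sum_{S \subseteq [n]} \sum_{\substack{T \subseteq [n] \\ |T| = |S|}} \per(B_{S,T})\,\per(A_{[n]\setminus S, [n]\setminus T})$, which is the claimed identity.

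There is no serious obstacle here; the only step that demands care is the bookkeeping of the bijection between permutations mapping the partition $(S, [n]\setminus S)$ onto $(T, [n]\setminus T)$ and pairs of block bijections, together with the remark that the constraint $|S| = |T|$ is forced (classes with $|T| \ne |S|$ contribute nothing), which is exactly why the outer sum may be written over pairs $(S,T)$ with $|S| = |T|$.
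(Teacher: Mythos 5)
Your proof is correct and is essentially the paper's argument in algebraic form: the paper phrases the same decomposition combinatorially (2-coloring the edges of a perfect matching of $K_{n,n}$ by $A$ or $B$ and grouping by the endpoints of the $B$-colored edges), which corresponds exactly to your sets $S$ and $T = \sigma(S)$ and the factorization into block bijections. No gaps; the bookkeeping you flag is handled correctly.
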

\begin{proof}
    We give a combinatorial proof. The hafnian $\haf(A+B)$ takes the summation
    over perfect matchings of the complete graph $K_{2n}$ with the product of edge weights.
    By expanding the product of $(A+B)_{i,j}$, this is equivalent to coloring each selected edge
    with one of two colors $A$ and $B$, and taking the product of the weights of edges with the selected
    color. Hence we can first determine the vertices whose matching edges have color $A$ and $B$
    respectively, say vertices colored by $B$ form the set $S$ (whose cardinality must be even). Then the contribution
    of such a coloring is $\haf(B_{S,S}) \haf(A_{[2n]\setminus S, [2n] \setminus S})$.
\end{proof}

\begin{theorem} \label{thm:haf}
    Given a symmetric matrix $A\in \bbF^{2k\times 2k}$ and a positive integer $r$,
    we can precompute in time $\tilde O\left(\binom{2k}{\downarrow 2r} \cdot 2^k\right)$, and answer the $r$-order evaluation
    of hafnian at $A$ in time $\tilde O\left( \binom{2k}{\downarrow 2r} \right)$.
    Both are measured in $\bbF$-operations.
\end{theorem}
\begin{proof}
    We write $F(t) = A + B(t)$, where $B(t)$ has no constant term.
    
    Note that when $|S| \geq 2r$,
    the term $\haf(B_{S,S})$ does not contribute to the result. Let $f(S) = \haf(B_{S,S})$,
    these $f(S)$ can be computed via dynamic programming, described as follows.

    For the base case, we have $f(\emptyset) = 1$.

    For $0 < |S| < 2r$ and $|S| \equiv 0 \pmod 2$, let $s$ be a member of $S$,
    by enumerating the matching vertex $v$ of $s$, we have
    \[ f(S) = \sum_{v\in S \setminus \{s\}} B_{s, v} f(S \setminus \{s, v\}). \]
    After computing all $f(S)$, we can compute
    \[ \haf (A+B) = \sum_{\substack{S \subseteq [2n] \\ |S| \equiv 0 \pmod 2\\ |S| < 2r}}
    f(S) g_S, \]
    where $g_S = \haf(A_{[2n] \setminus S})$ can be precomputed via Bj\"orklund's algorithm
    in time $\tilde O(2^k)$.
    The precomputation time is $\binom{2k}{\downarrow 2r} \cdot \tilde O(2^k)
    = \tilde O\left(\binom{2k}{\downarrow 2r} \cdot 2^k\right)$, and each query
    takes time $\tilde O\left( \binom{2k}{\downarrow 2r} \right)$.
\end{proof}

Note that when $r=\alpha k$ for some $0 < \alpha < 1/2$, by Lemma \ref{lemma:binom}, precomputation takes
time $\tilde O(2^{(1 + 2\HH(\alpha))k})$, and each query takes time $\tilde O(2^{2\HH(\alpha) k})$.

\section{Data structure for Hamiltonian cycles}

In \cite{BCKN15SingleExp} they considered that Hamiltonian cycles can be counted as spanning
trees with restricted degree and used it to count undirected Hamiltonian
cycles in time exponential of treewidth. We give a directed version.

Let $\sigma\in S_n$ be a permutation, let $P_\sigma$ denote the permutation matrix associated with
$\sigma$, such that $(P_\sigma)_{i,j} = \iv{j = \sigma(i)}$.
\begin{lemma} \label{lemma:detham}
    For a permutation $\sigma\in S_n$,
    \[ \det \left((I - P_\sigma)_{[n]\setminus\{1\},[n]\setminus\{1\}}\right) = \iv{c(\sigma) = 1}. \]
\end{lemma}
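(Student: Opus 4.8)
The plan is to decompose the permutation $\sigma$ into its disjoint cycles and compute the determinant of the matrix $(I - P_\sigma)_{[n]\setminus\{1\},[n]\setminus\{1\}}$ by understanding how deleting the row and column indexed by $1$ interacts with that cycle structure. First I would observe that $I - P_\sigma$, as a full $n\times n$ matrix, is block-diagonal up to simultaneous row/column permutation: each cycle $(v_1\, v_2\, \cdots\, v_\ell)$ of $\sigma$ contributes an $\ell\times\ell$ circulant-type block of the form $I_\ell - Z_\ell$, where $Z_\ell$ is the cyclic shift, and distinct cycles contribute to disjoint sets of rows and columns. The matrix $I_\ell - Z_\ell$ is exactly the (negative of the) combinatorial structure whose determinant is $0$, reflecting that $\mathbf 1$ is in its kernel (each row sums to zero). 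So $\det(I - P_\sigma) = 0$ whenever $n \geq 1$; the point of deleting row and column $1$ is precisely to kill the one-dimensional kernel coming from the cycle containing $1$.

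Next I would split into the two cases. If $c(\sigma) = 1$, then $\sigma$ is a single $n$-cycle, and after relabeling we may assume $\sigma = (1\, 2\, \cdots\, n)$, so $P_\sigma = Z_n$ and we must show $\det(I_n - Z_n)_{[n]\setminus\{1\},[n]\setminus\{1\}} = 1$. Deleting the first row and column of $I_n - Z_n$ leaves an $(n-1)\times(n-1)$ matrix that is upper (or lower, depending on convention) triangular with $1$'s on the diagonal — concretely, with the natural ordering the surviving matrix has $1$'s on the main diagonal and $-1$'s on the sub/super-diagonal and zeros elsewhere, hence determinant $1$. I would verify the triangular shape directly from $(P_\sigma)_{i,j} = [j = \sigma(i)]$: the only off-diagonal entry in row $i$ of $I - P_\sigma$ sits in column $\sigma(i)$, and once column $1 = \sigma(n)$ is removed, row $n$ becomes $e_n$, and inductively the matrix is triangular. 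If $c(\sigma) \geq 2$, let $C$ be the cycle of $\sigma$ containing the vertex $1$, of length $\ell \geq 1$. Deleting row and column $1$ affects only the block corresponding to $C$; all other cycle-blocks $I - Z$ remain intact, and at least one such block survives (since $c(\sigma)\geq 2$), contributing a factor of $\det(I - Z) = 0$. More carefully, after reordering so that the rows/columns are grouped by cycles, $(I - P_\sigma)_{[n]\setminus\{1\},[n]\setminus\{1\}}$ is block-diagonal: one block is $(I_\ell - Z_\ell)$ with one row and column removed, and the remaining blocks are the untouched $I - Z$ blocks of the other cycles. The determinant is the product of the block determinants, and any other cycle's block contributes $0$, so the whole determinant is $0 = [c(\sigma) = 1]$.

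The main obstacle, such as it is, is bookkeeping rather than mathematics: one must be careful that deleting row $1$ and column $1$ — which are genuinely different rows/columns of the permutation matrix unless $\sigma(1) = 1$, i.e. unless $1$ is a fixed point — still leaves the remaining structure block-diagonal after the relabeling, and that in the single-cycle case the resulting minor really is triangular with unit diagonal regardless of which element we chose to delete. The cleanest way to handle this uniformly is to conjugate $P_\sigma$ by a permutation matrix that lists the vertices of each cycle consecutively, starting the cycle of $1$ with the vertex $1$ itself; then row/column $1$ is the first row/column, its deletion is transparent, and the cycle-of-$1$ block becomes the triangular $(\ell-1)\times(\ell-1)$ minor (which is the empty $0\times 0$ matrix with determinant $1$ when $1$ is a fixed point, consistent with $\ell = 1$ contributing nothing). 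I expect the entire argument to be a short paragraph once this reordering is fixed.
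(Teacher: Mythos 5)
Your proof is correct, but it takes a different route from the paper. The paper observes that $I - P_\sigma$ is the Laplacian of the functional digraph of $\sigma$ and invokes the directed matrix tree theorem: the minor obtained by deleting row and column $1$ counts arborescences rooted at vertex $1$, and such an arborescence exists (uniquely) exactly when $\sigma$ is a single $n$-cycle. You instead give a direct linear-algebra computation: group rows and columns by the cycles of $\sigma$ (via a relabeling that fixes vertex $1$, so the deletion of row/column $1$ commutes with the conjugation), note that each intact cycle block $I_\ell - Z_\ell$ has zero row sums and hence determinant $0$, which kills the minor whenever some cycle avoids vertex $1$, and check that in the single-cycle case the surviving $(n-1)\times(n-1)$ block is unitriangular, giving determinant $1$. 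Your bookkeeping is sound, including the fixed-point edge cases (a fixed point $v\neq 1$ gives a $1\times 1$ zero block; $n=1$ gives the empty minor with determinant $1$). The trade-off is the usual one: the paper's argument is shorter and conceptually explains the identity as spanning-tree counting, but uses the directed matrix tree theorem as a black box; yours is longer in bookkeeping but entirely elementary and self-contained, and in fact essentially re-proves the special case of that theorem needed here.
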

\begin{proof}
    Consider a directed graph $G$ with directed edges $(i, \sigma(i))$, then $L = I - P_\sigma$ is exactly
    the Laplacian of the graph $G$. By the directed version of matrix tree theorem, $\det (L_{[n]\setminus\{1\},[n]\setminus\{1\}})$
    is the number of directed spanning trees rooted at vertex $1$. When $c(\sigma)=1$, then clearly
    there is exactly one spanning tree, otherwise there is no spanning tree.
    Thus we can conclude the claimed equality.
\end{proof}

Therefore, we use the above characterization of Hamiltonian cycles to help computing \hamcycle.

\begin{theorem} \label{thm:hamcycle}
    Given a matrix $A\in \bbF^{k\times k}$ and a positive integer $r$,
    we can precompute in $\tilde O\left( \binom{k}{\downarrow r} 4^k \right)$, and
    answer the $r$-order evaluation of Hamiltonian cycles polynomial at $A$ in time
    $\tilde O \left(\binom{k}{\downarrow r}^3 \right)$.
    Both are measured in $\bbF$-operations.
\end{theorem}

\begin{proof}
    By the definition of {\hamcycle} and Lemma \ref{lemma:detham}, we have
    \[
        \hc(A) = \sum_{\sigma\in S_k} \left(\prod_{i=1}^k A_{i,\sigma(i)}\right)
        \det \left((I - P_\sigma)_{[k]\setminus\{1\},[k]\setminus\{1\}}\right).
    \]
    We also expand the determinant by the Leibniz formula, i.e.,
    \[
        \det \left((I-P_\sigma)_{[k]\setminus\{1\},[k]\setminus\{1\}}\right) = \sum_{\substack{\tau\in S_k\\ \tau(1) = 1}}
        \sgn(\tau) \prod_{i=2}^k (I-P_\sigma)_{i, \tau(i)}.
    \]
    Combining the above two equations, and interpret $\sgn(\tau)$ as $(-1)^{\inv(\tau)}$, where
    $\inv(a)$ denotes the number of inversions for a sequence $a$, we have
    \begin{align*}
        \det \left((I-P_\sigma)_{[k]\setminus\{1\},[k]\setminus\{1\}}\right) &=
        \sum_{\substack{\sigma, \tau\in S_k\\ \tau(1) = 1}} \sgn(\tau)
        \left(\prod_{i=1}^k A_{i,\sigma(i)}\right) \left(\prod_{i=2}^k (I-P_\sigma)_{i, \tau(i)}\right)\\
        &= \sum_{\substack{\sigma, \tau\in S_k\\ \tau(1) = 1}} (-1)^{\inv(\tau)}
        \left(\prod_{i=1}^k A_{i,\sigma(i)}\right) \left(\prod_{i=2}^k \iv{i=\tau(i)} - \iv{\sigma(i) = \tau(i)}\right).
    \end{align*}

    Now consider dynamic programming. For $S \subseteq [k]\setminus \{1\}, T \subseteq [k]$
    and say $s = |S|=|T|$, let $f(S, T)$ only counts in the last $s$ values of $\sigma$ and $\tau$,
    with domain $\{\tau(k-s+1),\dots, \tau(k)\} = S$ and $\{\sigma(k-s+1),\dots, \sigma(k)\} = T$,
    and the inversions of $\tau$ in the last $s$ values are counted,
    i.e.,
    \begin{equation} \label{eqn:hcfinal}
        f(S, T) = \sum_{\sigma, \tau} (-1)^{\inv(\tau)} \left(\prod_{i=k-s+1}^k A_{i,\sigma(i)}\right)
        \left(\prod_{i=k-s+1}^k \iv{i=\tau(i)} - \iv{\sigma(i) = \tau(i)}\right).
    \end{equation}

    We let $a \gets^+ b$ denote $a\gets a+b$ for simplicity in describing the updating rules.
    The base case is simply $f(\emptyset, \emptyset) = 1$, and for each $s < k-1$, we use the computed values
    of $f(S, T)$ with $|S|=|T|=s$ to compute $f(S, T)$ with $|S|=|T|=s+1$ by the following rules.
    Let $i = k-s$.
    For each $j\notin T$, we can choose $\sigma(i)$ to be $j$, then there are
    two choices of $\tau(i)$:
    \begin{itemize}
        \item If $i\notin S$, update with
        \[
            f(S\cup \{i\}, T\cup \{j\}) \gets^+ (-1)^{\inv(i, S)}
            A_{i,j} f(S, T),
        \]
        denoting the choice that the contribution of term $\iv{i = \tau(i)}$ in equation \eqref{eqn:hcfinal}.
        \item If $j\notin S$, update with
        \[
            f(S\cup \{j\}, T\cup \{j\}) \gets^+ (-1)^{1+\inv(j, S)}
            A_{i,j} f(S, T),
        \]
        denoting the choice that the contribution of term $\iv{\sigma(i) = \tau(i)}$ in equation \eqref{eqn:hcfinal}.
    \end{itemize}
    Here $\inv(v, S)$ means the number of elements $x\in S$ such that $v > x$.

    Finally, we have the choice of $\sigma(1)$, thus
    \[
        \hc(A) = \sum_{i=1}^k A_{1, i} f([k] \setminus \{1\}, [k] \setminus \{i\}).
    \]

    This dynamic programming takes $\tilde O(4^k)$, which is slower than the usual
    one, but its dependence on the rows of $A$ is explicitly graded 
    by $s$, so is useful for our purpose.

    Now suppose the first $j$ rows are left undetermined, we can first preprocess
    all the $f(S, T)$ for $s \leq k - j$ in time $\tilde O(4^k)$, since their value does not depend
    on the first $j$ rows.
    Then for each query, i.e., given the first $j$ rows, can be computed in time
    \[
        \tilde O\left(\sum_{i=1}^j \binom{k}{i}\binom{k}{i-1} \right)
        = \tilde O\left( \binom{k}{\downarrow j}^2 \right).
    \]

    Write $F = A + B(t)$, where $B(t)$ has no constant term,
    by the multilinearity on rows of $\hc(\cdot)$, we have
    \[
        \hc(F(t)) = \hc(A+B) = \sum_{S\subseteq [k]} \hc(\rep_S(A, B)),
    \]
    where $\rep_S(A, B)$ denote the matrix obtained by replacing the rows indexed in $S$ of $A$ by
    those rows of $B$.
    The terms $|S| \geq r$ do not contribute to the result.
    For each $|S| < r$, we can reorder the rows and columns simultaneously to make
    $S$ be the first $|S|$ rows, and use the above dynamic programming to do precomputation
    and handle queries.
    
    There are $\binom{k}{\downarrow r}$ ways to choose $S$, so the precomputation needs
    $\tilde O\left(\binom{k}{\downarrow r} 4^k\right)$ time, and
    $\tilde O\left(\binom{k}{\downarrow r}^3\right)$ for each query.
\end{proof}

Note that when $r = \alpha k$ for some $0 < \alpha < 1/2$, by Lemma \ref{lemma:binom},
precomputation takes time $\tilde O(2^{(2 + \HH(\alpha))k})$, and each query takes time
$\tilde O(2^{3\HH(\alpha)k})$.

\section{The algorithms}

We first prove Theorem \ref{thm:finitefield} under some restrictions, and then remove the restrictions
by bootstrapping the results.

\begin{lemma} \label{lemma:haf}
    Let $q$ satisfy $q\geq n^2+1$ and $q \equiv 1 \pmod {b}$, where $b\geq 10$.
    There is an algorithm that computes the hafnian $\haf (A)$
    of a given symmetric matrix $A\in \bbF_q^{2n\times 2n}$
    in time $2^{n - \delta_b \sqrt n}q^{O(1)}$, for some $\delta_b > 0$.
\end{lemma}

\begin{proof}
    Let $\theta = \sqrt{\log(1.9)/\log(1+b)}$ and $k = \lfloor \theta \sqrt{n}\rfloor$, and consider the following algorithm.
    \begin{enumerate}
        \item First compute the Kakeya set by Theorem \ref{thm:kaksize} over $k^2$ variables of degree $u = (q-1)/b - 1$.
        \item Precompute the data structure for $r$-order evaluation for $r = \lceil k/b\rceil$
        at each point of $K$.
        \item Use the self-reduction of hafnian (Theorem \ref{thm:haf-selfred}) to reduce the problem to $m = 2^{n-k}n^{O(1)}$ instances
        of size $2k\times 2k$.
        \item For each instance, use Theorem \ref{thm:reval} to compute the hafnian.
    \end{enumerate}
    
    Then we analyze the time complexity. 
    In the precomputation phase,
    by Theorem \ref{thm:kaksize}, the size of Kakeya set is $(\frac{q-1}{u+1}+1)^{k^2+1} \leq (b+1)^{\theta^2 n+1} = O(1.9^n)$,
    and by Theorem \ref{thm:haf}, each data structure takes $2^{O(k)}$ time to precompute, so the total time of the first
    two steps is $1.9^{n+O(\sqrt n)} q^{O(1)}$.

    The data structure can answer $r$-order evaluation in time $\tilde O(2^{2\HH(\alpha)k})$.
    Here we have $\alpha = 1/b \leq 0.1$, hence $2\HH(\alpha) \leq 2\HH(0.1) < 0.94$, the total time
    in last two steps is
    \[
        2^{n-k}n^{O(1)} \cdot O(2^{0.94k}) q^{O(1)} = 2^{n-0.06k} q^{O(1)}
        = 2^{n-0.06 \theta\sqrt n} q^{O(1)}.
    \]
    In conclusion, we have $\delta_b = 0.06\theta$ satisfies the requirement.
\end{proof}

\begin{lemma} \label{lemma:hc}
    Let $q$ satisfy $q\geq n^2+1$ and $q \equiv 1 \pmod {b}$, where $b \geq 17$.
    There is an algorithm that computes Hamiltonian cycles $\hc (A)$
    of a given matrix $A\in \bbF_q^{n\times n}$
    in time $2^{n - \delta_b \sqrt n}q^{O(1)}$, for some $\delta_b > 0$.
\end{lemma}

\begin{proof}
    The algorithm is similar to the proof of Lemma \ref{lemma:haf}, with replacing
    the data structure for Hamiltonian cycles instead of hafnian.

    By Theorem \ref{thm:hamcycle}, the data structure can answer $r$-order evaluation of Hamiltonian cycles in time $\tilde O(2^{3\HH(\alpha)k})$,
    where $3\HH(\alpha) \leq 3\HH(1/17) < 0.97$.

    Then the total time in the last two steps is
    \[
        2^{n-k}n^{O(1)} \cdot O(2^{0.97k}) q^{O(1)} = 2^{n-0.03k} q^{O(1)}
        = 2^{n-0.03 \theta\sqrt n} q^{O(1)}.
    \]
    In conclusion, we have $\delta_b = 0.03\theta$ satisfies the requirement.
\end{proof}

\subsection{Proof of Theorem \ref{thm:finitefield}}
To prove Theorem \ref{thm:finitefield}, we only need to remove the conditions of
Lemma \ref{lemma:haf} and Lemma \ref{lemma:hc} on $q$ that $q\geq n^2+1$ and
$q\equiv 1 \pmod{b}$ for some fixed modulus $b$.

Note that for some integer $\ell$, we can embed $\bbF_q$ into a larger finite field $\bbF_{q^\ell}$. We only need to satisfy
$q^{\ell} \geq n^2+1$ and $q^{\ell} \equiv 1 \pmod{b}$. When $q$ is coprime with $b$, taking
$\ell = \varphi(b)$ is enough to satisfy the second condition, where $\varphi$ is the Euler totient function.
Taking $\ell$ as the smallest multiple of $\varphi(b)$ such that $q^{\ell} > n^2$, we have
$q^{\ell} \leq q^{\varphi(b)} n^2$.

For the hafnian, since $q$ is a prime power, it must be coprime with either $b=10$ or $b=11$.
For Hamiltonian cycles, $q$ must be coprime with either $b=17$ or $b=18$. Therefore,
we have $q^{\ell} = q^{O(1)} n^2$ since we only consider finite possibilities for $b$.

Therefore, by invoking the algorithms in Lemma \ref{lemma:haf} and Lemma \ref{lemma:hc}
through the finite field $\bbF_{q^\ell}$, we can compute the hafnian
and Hamiltonian cycles in time $2^{n - \Omega(\sqrt{n})}(q^\ell)^{O(1)} = 2^{n - \Omega(\sqrt{n})}q^{O(1)}$.

To actually support the computation in the finite field $\bbF_{q^\ell}$, we need to find an irreducible
polynomial $f$ and identify $\bbF_{q^\ell}$ as $\bbF_q[t]/(f)$. We can enumerate the polynomials
of degree $\ell$ over $\bbF_q$ and test whether they satisfy the conditions. By
\cite[Theorem 14.37]{ModernComputerAlgebra}, the time complexity of testing
irreducibility is $\poly(\ell, \log q)$. The time required to find an irreducible polynomial is
$O(q^{\ell} \poly(\log q^\ell))$, so this is not a bottleneck.
\hfill \qedsymbol

\subsection{Proof of Corollary \ref{cor:integer}}

The absolute value of $\haf(A)$ and $\hc(A)$ is trivially bounded by $C = (2n)!M^n$. Let $p_1,\dots,p_r$
be distinct prime numbers such that $D := \prod_i p_i > 2C + 1$. Then if we can compute
$\haf(A)$ and $\hc(A)$ modulo $D$, the values of $\haf(A)$ and $\hc(A)$ are uniquely determined.

By the Chinese remainder theorem, we only need to compute $\haf(A)$ and $\hc(A)$ modulo $p_i$ for each $i$,
and then combine them to get the result modulo $D$.

By Lemma \ref{lemma:primes}, the primes not greater than
$16\log D = O(n\log M)$ have their product greater than $D$. So we only need to compute $\haf(A)$ and $\hc(A)$
over finite fields $\bbF_q$ with $p = O(n\log M)$. By Theorem \ref{thm:finitefield}, we can compute
them in time $2^{n - \Omega(\sqrt{n})}p^{O(1)} = 2^{n - \Omega(\sqrt{n})}(\log M)^{O(1)}$.
There are $O(n\log M)$ instances to compute. Since the product of the chosen primes has $O(n\log M)$ bits,
by Theorem \ref{thm:CRT},
it takes $\poly(n\log M)$ time to combine them, which is not a bottleneck.
So the total time is $2^{n - \Omega(\sqrt{n})}(\log M)^{O(1)}$.
\hfill \qedsymbol

\section*{Acknowledgments}

The work was done when the author was an undergraduate student at Tsinghua University.
The author would like to thank Josh Alman and anonymous referees for helpful comments on earlier drafts.

\bibliographystyle{abbrvurl}
\bibliography{main.bib}

@incollection{B16Per,
 author = {Bj{\"o}rklund, Andreas},
 title = {Below all subsets for some permutational counting problems},
 booktitle = {15th Scandinavian symposium and workshops on algorithm theory, SWAT 2016, Reykjavik, Iceland, June 22--24, 2016. Proceedings},
 isbn = {978-3-95977-011-8},
 pages = {11},
 note = {Id/No 17},
 year = {2016},
 publisher = {Wadern: Schloss Dagstuhl -- Leibniz Zentrum f{\"u}r Informatik},
 language = {English},
 doi = {10.4230/LIPIcs.SWAT.2016.17},
 keywords = {68Q25,05A05,05A15,05C30,15A15},
 zbMATH = {6792393},
 Zbl = {1378.68061}
}

@article {BF96Per,
    AUTHOR = {Bax, Eric and Franklin, Joel},
     TITLE = {A finite-difference sieve to count paths and cycles by length},
   JOURNAL = {Inform. Process. Lett.},
  FJOURNAL = {Information Processing Letters},
    VOLUME = {60},
      YEAR = {1996},
    NUMBER = {4},
     PAGES = {171--176},
      ISSN = {0020-0190,1872-6119},
   MRCLASS = {05C38 (05C85 68Q25 68R10)},
  MRNUMBER = {1435148},
       DOI = {10.1016/S0020-0190(96)00159-7},
       URL = {https://doi.org/10.1016/S0020-0190(96)00159-7},
}

@article {BKW19Per,
    AUTHOR = {Bj\"{o}rklund, Andreas and Kaski, Petteri and Williams, Ryan},
     TITLE = {Generalized {K}akeya sets for polynomial evaluation and faster
              computation of fermionants},
   JOURNAL = {Algorithmica},
  FJOURNAL = {Algorithmica. An International Journal in Computer Science},
    VOLUME = {81},
      YEAR = {2019},
    NUMBER = {10},
     PAGES = {4010--4028},
      ISSN = {0178-4617,1432-0541},
   MRCLASS = {68P05 (05C45 68W30)},
  MRNUMBER = {3996927},
MRREVIEWER = {Miodrag\ \v{Z}ivkovi\'{c}},
       DOI = {10.1007/s00453-018-0513-7},
       URL = {https://doi.org/10.1007/s00453-018-0513-7},
}

@incollection{BGKMK22MultiEval,
 author = {Bhargava, Vishwas and Ghosh, Sumanta and Kumar, Mrinal and Mohapatra, Chandra Kanta},
 title = {Fast, algebraic multivariate multipoint evaluation in small characteristic and applications},
 booktitle = {Proceedings of the 54th annual ACM SIGACT symposium on theory of computing, STOC '22, Rome, Italy June 20--24, 2022},
 isbn = {978-1-4503-9264-8},
 pages = {403--415},
 year = {2022},
 publisher = {New York, NY: Association for Computing Machinery (ACM)},
 language = {English},
 doi = {10.1145/3519935.3519968},
 keywords = {68Q06,68P05},
 zbMATH = {7774348},
 Zbl = {1562.68054}
}

@incollection{BGGKU22MultiEval,
 author = {Bhargava, Vishwas and Ghosh, Sumanta and Guo, Zeyu and Kumar, Mrinal and Umans, Chris},
 title = {Fast multivariate multipoint evaluation over all finite fields},
 booktitle = {Proceedings of the 63rd annual IEEE symposium on foundations of computer science, FOCS 2022, Denver, CO, USA, October 31 -- November 3, 2022},
 isbn = {978-1-6654-5519-0},
 pages = {221--232},
 year = {2022},
 publisher = {Los Alamitos, CA: IEEE Computer Society},
 language = {English},
 doi = {10.1109/FOCS54457.2022.00028},
 keywords = {68-06},
 zbMATH = {8079917}
}

@article {BCKN15SingleExp,
    AUTHOR = {Bodlaender, Hans L. and Cygan, Marek and Kratsch, Stefan and
              Nederlof, Jesper},
     TITLE = {Deterministic single exponential time algorithms for
              connectivity problems parameterized by treewidth},
   JOURNAL = {Inform. and Comput.},
  FJOURNAL = {Information and Computation},
    VOLUME = {243},
      YEAR = {2015},
     PAGES = {86--111},
      ISSN = {0890-5401,1090-2651},
   MRCLASS = {05C85 (68Q25)},
  MRNUMBER = {3352777},
       DOI = {10.1016/j.ic.2014.12.008},
       URL = {https://doi.org/10.1016/j.ic.2014.12.008},
}

@book {Ryser63Per,
    AUTHOR = {Ryser, Herbert John},
     TITLE = {Combinatorial mathematics},
    SERIES = {The Carus Mathematical Monographs},
    VOLUME = {No. 14},
 PUBLISHER = {Mathematical Association of America, distributed by John
              Wiley and Sons, Inc., New York},
      YEAR = {1963},
     PAGES = {xiv+154},
   MRCLASS = {05.00},
  MRNUMBER = {150048},
MRREVIEWER = {John\ Riordan},
       doi = {10.5948/UPO9781614440147},
}

@article {KU11PolyFact,
    AUTHOR = {Kedlaya, Kiran S. and Umans, Christopher},
     TITLE = {Fast polynomial factorization and modular composition},
   JOURNAL = {SIAM J. Comput.},
  FJOURNAL = {SIAM Journal on Computing},
    VOLUME = {40},
      YEAR = {2011},
    NUMBER = {6},
     PAGES = {1767--1802},
      ISSN = {0097-5397,1095-7111},
   MRCLASS = {11Y16 (11Y05 12E05 68W30 68W40)},
  MRNUMBER = {2863194},
MRREVIEWER = {Samuel\ S.\ Wagstaff, Jr.},
       DOI = {10.1137/08073408X},
       URL = {https://doi.org/10.1137/08073408X},
}

@article {Valiant79Per,
    AUTHOR = {Valiant, L. G.},
     TITLE = {The complexity of computing the permanent},
   JOURNAL = {Theoret. Comput. Sci.},
  FJOURNAL = {Theoretical Computer Science},
    VOLUME = {8},
      YEAR = {1979},
    NUMBER = {2},
     PAGES = {189--201},
      ISSN = {0304-3975,1879-2294},
   MRCLASS = {68C25 (15A15)},
  MRNUMBER = {526203},
       DOI = {10.1016/0304-3975(79)90044-6},
       URL = {https://doi.org/10.1016/0304-3975(79)90044-6},
}

@book {ModernComputerAlgebra,
    AUTHOR = {von zur Gathen, Joachim and Gerhard, J\"{u}rgen},
     TITLE = {Modern computer algebra},
   EDITION = {Third},
 PUBLISHER = {Cambridge University Press, Cambridge},
      YEAR = {2013},
     PAGES = {xiv+795},
      ISBN = {978-1-107-03903-2},
   MRCLASS = {68W30 (11Y05 11Y11 13Pxx)},
  MRNUMBER = {3087522},
       DOI = {10.1017/CBO9781139856065},
       URL = {https://doi.org/10.1017/CBO9781139856065},
}

@article {Valiant79Hamil,
    AUTHOR = {Valiant, Leslie G.},
     TITLE = {The complexity of enumeration and reliability problems},
   JOURNAL = {SIAM J. Comput.},
  FJOURNAL = {SIAM Journal on Computing},
    VOLUME = {8},
      YEAR = {1979},
    NUMBER = {3},
     PAGES = {410--421},
      ISSN = {0097-5397},
   MRCLASS = {68C25},
  MRNUMBER = {539258},
       DOI = {10.1137/0208032},
       URL = {https://doi.org/10.1137/0208032},
}

@incollection{Karp72NPC,
     AUTHOR = {Karp, Richard M.},
     TITLE = {Reducibility among combinatorial problems},
 BOOKTITLE = {Complexity of computer computations ({P}roc. {S}ympos., {IBM}
              {T}homas {J}. {W}atson {R}es. {C}enter, {Y}orktown {H}eights,
              {N}.{Y}., 1972)},
    SERIES = {The IBM Research Symposia Series},
     PAGES = {85--103},
 PUBLISHER = {Plenum, New York-London},
      YEAR = {1972},
   MRCLASS = {68A20},
  MRNUMBER = {378476},
MRREVIEWER = {John\ T.\ Gill},
       doi = {10.1007/978-1-4684-2001-2_9},
}

@incollection {BW19PerDissVec,
    AUTHOR = {Bj\"{o}rklund, Andreas and Williams, Ryan},
     TITLE = {Computing permanents and counting Hamiltonian cycles by
              listing dissimilar vectors},
 BOOKTITLE = {46th {I}nternational {C}olloquium on {A}utomata, {L}anguages,
              and {P}rogramming},
    SERIES = {LIPIcs. Leibniz Int. Proc. Inform.},
    VOLUME = {132},
     PAGES = {Art. No. 25, 14},
 PUBLISHER = {Schloss Dagstuhl. Leibniz-Zent. Inform., Wadern},
      YEAR = {2019},
      ISBN = {978-3-95977-109-2},
   MRCLASS = {68R10},
  MRNUMBER = {3984842},
       doi = {10.4230/LIPIcs.ICALP.2019.25},
}

@article {BHL17PerModP,
    AUTHOR = {Bj\"{o}rklund, Andreas and Husfeldt, Thore and Lyckberg, Isak},
     TITLE = {Computing the permanent modulo a prime power},
   JOURNAL = {Inform. Process. Lett.},
  FJOURNAL = {Information Processing Letters},
    VOLUME = {125},
      YEAR = {2017},
     PAGES = {20--25},
      ISSN = {0020-0190,1872-6119},
   MRCLASS = {68R10 (68W20)},
  MRNUMBER = {3656362},
       DOI = {10.1016/j.ipl.2017.04.015},
       URL = {https://doi.org/10.1016/j.ipl.2017.04.015},
}

@article {CP15PerAvgDeg,
    AUTHOR = {Cygan, Marek and Pilipczuk, Marcin},
     TITLE = {Faster exponential-time algorithms in graphs of bounded
              average degree},
   JOURNAL = {Inform. and Comput.},
  FJOURNAL = {Information and Computation},
    VOLUME = {243},
      YEAR = {2015},
     PAGES = {75--85},
      ISSN = {0890-5401,1090-2651},
   MRCLASS = {05C85 (68Q25)},
  MRNUMBER = {3352776},
       DOI = {10.1016/j.ic.2014.12.007},
       URL = {https://doi.org/10.1016/j.ic.2014.12.007},
}

@book {TAOCPVol2,
    AUTHOR = {Knuth, Donald E.},
     TITLE = {The art of computer programming. {V}ol. 2: {S}eminumerical
              algorithms},
 PUBLISHER = {Addison-Wesley Publishing Co., Reading, Mass.-London-Don
              Mills, Ont.},
      YEAR = {1969},
     PAGES = {xi+624},
   MRCLASS = {68.00 (65.00)},
  MRNUMBER = {286318},
MRREVIEWER = {M.\ Muller},
}

@Book{ExactExp,
 Author = {Fomin, Fedor V. and Kratsch, Dieter},
 Title = {Exact exponential algorithms.},
 FSeries = {Texts in Theoretical Computer Science. An EATCS Series},
 Series = {Texts Theor. Comput. Sci., EATCS Ser.},
 ISSN = {1862-4499},
 ISBN = {978-3-642-16532-0; 978-3-642-16533-7},
 Year = {2010},
 Publisher = {Berlin: Springer},
 Language = {English},
 DOI = {10.1007/978-3-642-16533-7},
 Keywords = {68-01,68-02,68Q17,68Q25,68W05,68W40},
 zbMATH = {5817382},
 Zbl = {1370.68002}
}

@article {DKSS13Multi,
    AUTHOR = {Dvir, Zeev and Kopparty, Swastik and Saraf, Shubhangi and
              Sudan, Madhu},
     TITLE = {Extensions to the method of multiplicities, with applications
              to {K}akeya sets and mergers},
   JOURNAL = {SIAM J. Comput.},
  FJOURNAL = {SIAM Journal on Computing},
    VOLUME = {42},
      YEAR = {2013},
    NUMBER = {6},
     PAGES = {2305--2328},
      ISSN = {0097-5397,1095-7111},
   MRCLASS = {68Q87 (51E99)},
  MRNUMBER = {3143848},
MRREVIEWER = {Tom\ Sanders},
       DOI = {10.1137/100783704},
       URL = {https://doi.org/10.1137/100783704},
}

@article {dvir2009kakeya,
    AUTHOR = {Dvir, Zeev},
     TITLE = {On the size of {K}akeya sets in finite fields},
   JOURNAL = {J. Amer. Math. Soc.},
  FJOURNAL = {Journal of the American Mathematical Society},
    VOLUME = {22},
      YEAR = {2009},
    NUMBER = {4},
     PAGES = {1093--1097},
      ISSN = {0894-0347,1088-6834},
   MRCLASS = {52C17 (05B25 42B25)},
  MRNUMBER = {2525780},
MRREVIEWER = {Anthony\ Carbery},
       DOI = {10.1090/S0894-0347-08-00607-3},
       URL = {https://doi.org/10.1090/S0894-0347-08-00607-3},
}

@article {MT04Construct,
    AUTHOR = {Mockenhaupt, Gerd and Tao, Terence},
     TITLE = {Restriction and {K}akeya phenomena for finite fields},
   JOURNAL = {Duke Math. J.},
  FJOURNAL = {Duke Mathematical Journal},
    VOLUME = {121},
      YEAR = {2004},
    NUMBER = {1},
     PAGES = {35--74},
      ISSN = {0012-7094,1547-7398},
   MRCLASS = {11T24 (42B10 52C17)},
  MRNUMBER = {2031165},
MRREVIEWER = {James\ E.\ Daly},
       DOI = {10.1215/S0012-7094-04-12112-8},
       URL = {https://doi.org/10.1215/S0012-7094-04-12112-8},
}

@inproceedings{bjo12haf,
  title={Counting perfect matchings as fast as {R}yser},
  author={Bj{\"o}rklund, Andreas},
  booktitle={Proceedings of the twenty-third annual acm-siam symposium on discrete algorithms},
  pages={914--921},
  year={2012},
  organization={SIAM},
  doi = {10.1137/1.9781611973099.73},
}

@inproceedings{bhkk07subset,
  title={Fourier meets {M}{\"o}bius: fast subset convolution},
  author={Bj{\"o}rklund, Andreas and Husfeldt, Thore and Kaski, Petteri and Koivisto, Mikko},
  booktitle={Proceedings of the thirty-ninth annual ACM symposium on Theory of computing},
  pages={67--74},
  year={2007},
  doi={10.1145/1250790.1250801},
}

@article {BGGKU24MultiEval,
    AUTHOR = {Bhargava, Vishwas and Ghosh, Sumanta and Guo, Zeyu and Kumar,
              Mrinal and Umans, Chris},
     TITLE = {Fast multivariate multipoint evaluation over all finite
              fields},
   JOURNAL = {J. ACM},
  FJOURNAL = {Journal of the ACM},
    VOLUME = {71},
      YEAR = {2024},
    NUMBER = {3},
     PAGES = {Art. 22, 32},
      ISSN = {0004-5411,1557-735X},
   MRCLASS = {68Q11 (11T71 12-08)},
  MRNUMBER = {4763252},
MRREVIEWER = {Dorothy\ Bollman},
       doi = {10.1145/3652025},
}

@InProceedings{BKK17HamilModP,
  author =	{Bj\"{o}rklund, Andreas and Kaski, Petteri and Koutis, Ioannis},
  title =	{Directed {H}amiltonicity and Out-Branchings via Generalized {L}aplacians},
  booktitle =	{44th International Colloquium on Automata, Languages, and Programming (ICALP 2017)},
  pages =	{91:1--91:14},
  series =	{Leibniz International Proceedings in Informatics (LIPIcs)},
  ISBN =	{978-3-95977-041-5},
  ISSN =	{1868-8969},
  year =	{2017},
  volume =	{80},
  editor =	{Chatzigiannakis, Ioannis and Indyk, Piotr and Kuhn, Fabian and Muscholl, Anca},
  publisher =	{Schloss Dagstuhl -- Leibniz-Zentrum f{\"u}r Informatik},
  address =	{Dagstuhl, Germany},
  URN =		{urn:nbn:de:0030-drops-74208},
  doi =		{10.4230/LIPIcs.ICALP.2017.91},
  annote =	{Keywords: counting, directed Hamiltonicity, graph Laplacian, independent set, k-internal out-branching}
}

@article {CKN18BipHamil,
    AUTHOR = {Cygan, Marek and Kratsch, Stefan and Nederlof, Jesper},
     TITLE = {Fast {H}amiltonicity checking via bases of perfect matchings},
   JOURNAL = {J. ACM},
  FJOURNAL = {Journal of the ACM},
    VOLUME = {65},
      YEAR = {2018},
    NUMBER = {3},
     PAGES = {Art. 12, 46},
      ISSN = {0004-5411,1557-735X},
   MRCLASS = {05C45 (05C70 68R10)},
  MRNUMBER = {3776680},
       DOI = {10.1145/3148227},
       URL = {https://doi.org/10.1145/3148227},
}

@article {Bjo14unHam,
    AUTHOR = {Bj\"orklund, Andreas},
     TITLE = {Determinant sums for undirected {H}amiltonicity},
   JOURNAL = {SIAM J. Comput.},
  FJOURNAL = {SIAM Journal on Computing},
    VOLUME = {43},
      YEAR = {2014},
    NUMBER = {1},
     PAGES = {280--299},
      ISSN = {0097-5397,1095-7111},
   MRCLASS = {05C85 (05C38 05C45)},
  MRNUMBER = {3166977},
MRREVIEWER = {Zbigniew\ R.\ Bogdanowicz},
       DOI = {10.1137/110839229},
       URL = {https://doi.org/10.1137/110839229},
}

\end{document}